\def\Context{}
\def\Context{WITHCOMMENTS}
    \newcommand{\comment}[1]{\textcolor{red}{#1}}
    \newcommand{\comment}[1]{}
\newcommand{\Distances}[1]{}
\algrenewcommand\alglinenumber[1]{\scriptsize #1:} 
\let\@@pmod\pmod
\DeclareRobustCommand{\pmod}{\@ifstar\@pmods\@@pmod}
\def\@pmods#1{\mkern4mu({\operator@font mod}\mkern 6mu#1)}
\DeclareMathAlphabet{\mathpzc}{OT1}{pzc}{m}{it}
\begin{document}
\title{Rational Threshold Cryptosystems}
\author{David Yakira \and Ido Grayevsky \and Avi Asayag}
\institute{Orbs Research\\
\email{$\{$david,ido,avi$\}$@orbs.com}\\
}
\maketitle

\section{Abstract}
\label{Abstract}

We propose a framework for threshold cryptosystems under a permissionless-economic model in which the participants are rational profit-maximizing entities. To date, threshold cryptosystems have  been considered under permissioned settings with a limited adversary. Our framework relies on an escrow service that slashes and redistributes deposits to incentivize participants to adhere desired behaviors. Today, more than ever, sophisticated escrow services can be implemented over public blockchains like Ethereum, without additional trust assumptions. The key threat to rational threshold cryptosystems is \emph{collusion}---by cooperating ``illegally'', a subset of participants can reveal the cryptosystem's secret, which, in turn is translated to unfair profit. Our countermeasure to collusion is \emph{framing}. If the escrow is notified of collusion, it rewards the framer and slashes the deposits of all other participants. We show that colluding parties find themselves in the prisoner's dilemma, where the dominant strategy is framing.




\section{Introduction}
\label{Introduction}

In threshold cryptosystems several parties cooperate in order to produce a signature, decrypt a ciphertext, or reveal a secret. In particular, in a $(t,n)$-threshold cryptosystem there are $n$ participants, any $t+1$ of which need to cooperate in order to use the system. Any smaller subset is doomed to fail.

The motivation for our work stems from the key insight that threshold cryptosystems have an inherent conflict---on the one hand they require the cooperation of their participants in order to function, while on the other hand, in certain circumstances, cooperation completely undermines the functionality of the system and is undesired. To our knowledge, this difficulty has not been explicitly formulated and a proper solution has not been proposed.

A simple example is found in $(t,n)$ publicly verifiable secret sharing~\cite{PVSS}, where $t+1$ players can co-decrypt a player's commitment. The players are instructed to co-decrypt only after all players have shared their commitments, and not beforehand. Otherwise, the output of the protocol can be easily manipulated (\cite{Ouroboros} use randomness beacons in this spirit). 

Consider another picturesque example. Satoshi Nakamoto wishes to reveal her identity, but only after she dies. She is willing to pay $R$ bitcoins to achieve her goal. She anonymously contacts $n$ known figures in the Bitcoin community and offers them a deal---she would share with them, via Shamir Secret Sharing, a signed Bitcoin transaction transferring each of them $R/n$ bitcoins, from an address known to belong to Satoshi. The secret would include her real-world identity, such that the cooperation of $t+1$ of them would both reveal her identity and allow each participant to realize her $R/n$ reward. Of course, from the participants' perspective, there is no reason to trust this anonymous character, let alone believe she is Satoshi. All the same, the promised $R/n$ bitcoins should incentivize them to participate\footnote{Proving that the transaction indeed encompasses the promised bitcoins, without revealing its input UtxOs, can be done in zero-knowledge.}. The problem is that while Satoshi only wants to reveal her secret after she dies
, the participants have clear motivation to cooperate before that time and gain an immediate reward (and finally find out who Satoshi is).

Traditional threshold cryptosystems assume that the participants follow the protocol blindly, unless they have been corrupted by the \emph{adversary}, which then controls their behavior. The adversary is limited and can corrupt up to $t<n/2$ participants\footnote{The restriction $t<n/2$ guarantees honest majority, which implies that there are $t+1$ honest participants that would advance the protocol. When robustness is not of concern, this restriction can be relaxed.}. This model immediately circumvents the key difficulty mentioned above---when cooperation is not allowed by the protocol, non-corrupted participants will simply refuse to cooperate. Evidently though, an adversary controlling $t$ participants is unlikely to stop there as she is highly motivated to corrupt the $t+1$ participant and get control of the underlying cryptosystem.

To tackle the problem we must first define formally what can be gained and what are the stakes in the system. We therefore start by considering an economic layer atop the cryptosystem. We abandon the assumption that participants can be trusted, and assume instead that they are rational entities seeking to maximize their own profit within this economic environment. To be able to shape their behavior we rely on an escrow service that takes deposits and follows a specific set of rules. Such an escrow can be realized, for example, as a smart contract over Ethereum~\cite{EthereumButerin2013whitepaper}. 

The idea of enhancing a cryptosystem using an escrow service was suggested by Randao~\cite{RandaoPaperV085}---a multi-party coin tossing protocol for generating randomness. There, if a participant fails to submit her valid secret, her deposit is slashed. There are three main caveats to this approach. First, the price of failing the protocol is low---failing amounts to convincing a \emph{single} participant to refuse to submit her secret. 
Second, it remains cryptic as to what is the incentive of participants to join the service in the first place. Implicitly, one can assume there is some intrinsic value tied to the system, but this is not articulated and Randao does not provide an analysis tying this value to the required deposits. 
Finally, in its current implementation, Randao invokes too much interaction with Ethereum, and cannot scale well with the number of participants\footnote{This could be improved using off-chain communication and interactive dispute protocols in the spirit of App.~\ref{Escrow-DKG-Ethereum}.}.

A rational threshold cryptosystem that has been widely studied is \emph{rational secret sharing}~\cite{RationalSecretSharingHalpern,RationalSecretSharingGordon,RationalSecretSharingLindell,RationalSecretSharingNonSimultanous1,RationalSecretSharingNonSimultanous2}. There however, only a specific aspect of the problem is dealt with, namely that participants prefer to learn the secret \emph{alone}. The objective of the protocol is to guarantee that all participants learn of the secret together. In particular, cooperation at unintended times is not an option under their model and is not considered.


Our framework handles both problems---cooperating ``illegally'', which we refer to as \emph{collusion}; and the ability to assure legitimate cooperation when it is required, which we refer to as \emph{robustness}. For a given threshold cryptosystem, our framework allows to assess the economic constraints on collusion (Sec.~\ref{collusion}) and the price to interrupt with the robustness of the protocol (Sec.~\ref{Robustness}). It thus allows system designers to build systems with adequate economic collusion resistance and economic robustness, as we show in Sec.~\ref{properties}. It also provides tools for users to assess the trustworthiness of the system.

We demonstrate the usefulness of our framework with two examples. In the first, unique threshold signatures are used to realize a randomness beacon (as done in~\cite{Dfinity,RandaoPaperV085} for example), which is then used for a simple lottery game. We also elaborate on Satoshi's example, previously mentioned.


\section{Background---Threshold Cryptography}

\emph{Shamir's secret sharing} (SSS)~\cite{SecretSharing} is an algorithm that realizes a threshold cryptosystem in which $t+1$ participants need to cooperate in order to reveal a secret. The procedure is simple. The owner of a secret $x$ randomly generates a degree $t$ polynomial $f$, where $f(0)=x$. She then numerates the participants $P_1,\dots,P_n$ and shares with $P_i$ the value $f(i)$, which is referred to as $P_i$'s \emph{secret key share}. Now, any $t+1$ participants can reconstruct the secret $f(0)$ by using Lagrange interpolation. No smaller set of participants can deduce any information on $f(0)$ whatsoever.

Threshold cryptosystems for signatures or encryption consist of two stages:
\begin{enumerate}
\item A key generation procedure, executed once to set up the system by generating the necessary keys---participants end up with secret key shares, that together correspond to some global secret key $x$; $x$ is kept secret from \emph{all} participants, but $X$, its corresponding public key, is known to all.   
\item An application stage, where the key shares may be used jointly over and over in order to sign messages or co-decrypt ciphertexts. 
\end{enumerate}
Since the key shares must be related to one another, key generation must be coordinated and cannot be done independently by each participant. In the absence of a trusted coordinator, \emph{Distributed Key Generation} (DKG) protocols allow a set of $n$ participants to jointly distribute key shares, without leaking any information on the secret $x$, and while maintaining the desired threshold $t$.

\label{Ped-DKG}
Our work relies on a well-known family of DKG protocols for discrete-log based threshold cryptosystems~\cite{FeldmanVSS,PedersenDKG,DKG,DKGRevisited} that rely on SSS. Feldman~\cite{FeldmanVSS} showed how SSS can be made \emph{verifiable}, by having the protocol produce public data allowing the participants to check the validity of their private key shares. Feldman's \emph{verifiable secret sharing} (VSS) 
relies on a trusted dealer to generate the global secret. Pedersen~\cite{PedersenDKG} was first to eliminate the need for a trusted dealer by parallelizing $n$ Feldman's VSS runs, each led by a participant $P_i$ committing (publicly) to her \emph{local} polynomial $f_i$. Then, the global polynomial $f$ is defined to be the sum of the local polynomials, and the global secret is $f(0)$. This protocol is known as Ped-DKG. In App.~\ref{Ped-DKG-Appendix} we give an overview of this protocol.  

In~\cite{DKG,DKGRevisited}, Gennaro et al. formulate a minimal set of requirements for a $(t,n)$-DKG protocol in discrete-log based cryptosystems over a group $G=\langle g \rangle$ of prime order $q$: 

\paragraph{Correctness.} 
\begin{itemize}
\item[(C1)] There is an efficient procedure that on an input consisting of any $t+1$ distinct (and correct) secret key shares produced by the protocol outputs the same secret $x$. There is another efficient procedure\footnote{This procedure alludes to some efficient way to verify the validity of a single share.} that on an input consisting of $n$ different secret key shares (of which at least $t+1$ are correct) and the public data produced by the protocol, outputs the secret $x$.
\item[(C2)] There is an efficient procedure to calculate $g^x$ from the public data produced by the protocol, where $x$ is the unique secret key guaranteed by (C1). 

\item[(C3)] $x$ is uniformly distributed in $Z_q$, hence $g^x$ is uniformly distributed in $G$.
\end{itemize}

\paragraph{Secrecy.} No information on $x$ can be learned by the adversary except for what is implied by the value $g^x$. More formally, secrecy depends on the existence of a distribution, parameterized by $g^x$ alone, which generates data that is indistinguishable from the public data produced by the protocol (in the eyes of the adversary, which is assumed to know $t$ secret key shares). 

\section{Rational Threshold Cryptosystems}
\label{Protocol}
In this section we set the ground for threshold cryptosystems that guarantee correctness, secrecy, and robustness in rational environments. 
We propose a framework that encompasses the entire lifetime of the cryptosystem---from key generation to application. The key generation kick-starts the system by distributing \emph{correct} keys to the participants. These are later used within some desired application, in which \emph{secrecy} and \emph{robustness} are to be maintained. 


\subsection{Model}

In our model a set of $n$ rational \emph{participants} $P_1,\dots,P_n$ take part in the cryptosystem. We cannot expect them to follow some predetermined set of instructions, and assume instead that they are driven to maximize their \emph{utility}. We make a distinction between participants and \emph{entities}---different entities have independent utilities, but different participants may correspond to the same entity\footnote{One  approach to model \emph{permissioned participation} is by assuming that participants correspond to different entities. Our model is permissionless in this sense.}. The utility of an entity is her total profit (or loss) by the end of the system's lifetime. 

We assume the existence of 
a transparent escrow service that takes participants' deposits and can burn deposits, or redistribute them, according to predetermined rules. External entities, i.e., ones that do not have a participant in the cryptosystem (we refer to them as the \emph{public}), are exposed to the cryptosystem via the transparent escrow service. 
Entities communicate with the escrow via (signed) transactions. The escrow is assumed to be publicly available with a known bound on transactions' processing time. Such an escrow can be implemented as a smart contract over Ethereum\footnote{Ethereum, however, has very limited throughput and the cost for on-chain transactions is high. While many techniques minimize the on-chain component of the system and carry all of the heavy lifting off-chain (e.g., TrueBit~\cite{TrueBit}, and Arbitrum~\cite{Arbitrum}), in this section we focus on the economic perspective of our system, and ignore limitations imposed by Ethereum. In a sense, we assume ``an ideal Ethereum'' whose capacity and utilization costs resemble that of a modern computer. In App.~\ref{Escrow-DKG-Ethereum} we show how to adjust our framework to comply with Ethereum in its current state.}. Moreover, we assume a public broadcast channel and a complete network of private channels between the participants (as required by Ped-DKG). Finally, we do not force any constraints on the ratio between $t$ and $n$---different ratios reflect different trade-offs. 



\subsubsection*{Economic adversarial model.}

To formulate our economic model we assume that a threshold cryptosystem holds some total value $R$. During its lifetime, $\alpha R$ of that value, for some $0\leq \alpha \leq 1$, is intended to reward the participants. The reward however, is conditioned on the  cooperation of the participants under application-specific limitations. In expectancy, each participant gets $\frac{\alpha R}{n}$. The remaining, $(1-\alpha) R$ goes to the external entities. If $t+1$ participants manage to cooperate ``illegally'' while ignoring the limitations, they can enjoy a larger portion of the reward $R$\footnote{This model suits the economic setting in Bitcoin---$R$ is $21$ million bitcoins, $\alpha=1$, $t=\frac{n}{2}$, and one participant can be thought of as one hash rate unit. Selfish mining~\cite{SelfishMining,SelfishMining-AvivZohar} is interpreted as ``illegal'' collusion, where $51\%$ of the hash rate results in $100\%$ of the rewards. Of course, the total hash rate in Bitcoin is ever changing and miners join and leave the network as they please, which means maintaining $51\%$ of the hash rate can be difficult.}. We refer to such cooperation as \emph{collusion}. The main goal of our framework is to prevent the option of profitable collusion.   



\paragraph{Example I.}
In a simple lottery game the winning ticket is determined according to a random value in the form of a unique threshold signature over some nothing-up-my-sleeve (publicly known) value. The total money raised selling lottery tickets is $R$, where $(1-\alpha)R$ is the total prize and $\alpha R$ is paid to the participants generating the signature. The lottery must guarantee that no entity becomes aware of the winning ticket whilst the ticket sale is taking place\footnote{For instance, if the escrow is a smart contract over Ethereum, the schedule of the lottery would be determined by Ethereum block heights---certain block heights for the ticket sale and later blocks for producing the signature.}. Thus, the application-specific limitation is that the signature is produced only after the ticket sale closes.

\paragraph{Example II.}
In Satoshi's example, $R$ is the total amount of bitcoins paid by Satoshi, and $\alpha=1$. Collusion translates to immediate profit rather than having to wait until Satoshi passes away.

\subsection{Design rationale}

Our economic model strongly captures the inherent collusion pressure in threshold cryptosystems. The main goal of the framework we propose is to dismantle this pressure by aligning the participants' utility-maximizing strategies with the application-specific limitations. The system designer's only tool in shaping the behavior of the participants is the escrow service, which (unlike the participants) follows  prescribed rules in the protocol without deviations.  

Our first design choice is to condition participation on a deposit made to the escrow. Fundamentally, we interpret a threshold cryptosystem as an investment channel in which entities invest an initial sum of money (the deposit), and hope for a return (the reward, $\frac{\alpha R}{n}$)\footnote{In our Bitcoin analogy, this is equivalent to miners buying ASICs and paying electricity bills for a potential reward.}. As in any investment channel though, a risk aspect is inevitable\footnote{While in most investment channels the risk stems from market forces, here, there are two types of risks: technical risks (e.g., software bugs or cyberattacks) and behavioral risks, which depend on the choices the other participants make.}. Indeed, the escrow can decide to slash deposits in case \emph{complaints} are filed. 

Complaints are the mechanism by which the escrow guides the participants' behavior in practice. The escrow supports a set of complaints that should deter participants from deviating from their intended behavior. The complaint mechanism is composed of four elements: 
\begin{enumerate}
\item Detection. A participant detects unintended behavior by another participant.
\item Proof. She provides the escrow with evidence of her findings.
\item Arbitration. The escrow verifies her proof.
\item Incentive layer. The escrow burns deposits and rewards participants according to some predetermined specification.   
\end{enumerate}

A complaint mechanism is measured by its ability to support detection and efficient arbitration of as many unintended behaviors as possible. Moreover, its deterrence relies on complaining being profitable, which should be addressed by the incentive layer. 

Complaints may have additional consequences atop redistributing and burning deposits. In the original Ped-DKG protocol complaints result in revealing secret sub-shares\footnote{Revealing up to $t$ of her sub-shares, a participant does not leak any information about her local polynomial. This is of course theoretically true, but weakens the security guarantees of the protocol. Imagine a situation where participant $P_1$ has revealed $t$ of her sub-shares (due to $t$ complaints against her). Without loss of generality, let those be $f_1(2), \dots, f_1(t+1)$. Now, in order to reveal $f_1$, it suffices to hack \textbf{any} of the participants that did not complain, i.e., $P_{t+2}, \dots, P_n$. This is a much easier task then having to specifically hack $P_1$. We thus avoid instructing participants to reveal their sub-shares.} (or excluding participants from the set QUAL, see App.~\ref{Ped-DKG-Appendix}), but under the assumed adversarial model there, the DKG procedure never fails. In contrast, our framework must allow the protocol to fail, for example when collusion was detected. This introduces an attack vector against the system which is to be mitigated by setting an appropriate cost to failing the system. We refer to this cost as \emph{economic robustness}. 

\subsection{Escrow-DKG and application}

\afterpage{
 
\noindent\makebox[\textwidth][c]{
\begin{minipage}{1\linewidth}

\begin{mdframed} [font=\small, align=left,
innertopmargin=4pt,
    innerbottommargin=4pt,
    innerrightmargin=4pt,
    innerleftmargin=4pt,
    skipabove=0pt,
    skipbelow=0pt,]
    \captionof{figure}{\textbf{Escrow-DKG}}

\label{Escrow-DKG-fig}
\begin{enumerate}

\item[1] Enrollment. The relevant participants enroll to the protocol by submitting \emph{enrollment} transactions to the escrow service. 
We refer to the participants by their enrollment order $1\leq i \leq n$. To enroll, participant $P_i$ locally generates:
\begin{enumerate}
\item A random polynomial of degree $t$ over $Z_q$, $$f_i(z)=a_{i,0}+a_{i,1}z+\dots+a_{i,t}z^t$$

\item A private key $\xi_i \in Z_q$ for private communication.

\item Public commitments to $f_i$'s coefficients, $X_{i,k}=g^{a_{i,k}}$ for $k \in \{ 0,\dots, t \}$.
\end{enumerate}

She then submits her enrollment transaction, 
$$tx_1(i)=\Big[\Delta,\gamma^{\xi_i}, H(X_{i,0}) \Big]$$
where, $\Delta$ is her deposit, $\gamma^{\xi_i}$ is her public encryption key and $H(X_{i,0})$ is her hash commitment to her \emph{partial secret} $a_{i,0}$.

\item[2] Public commitments. Every participant $P_i$ publishes \emph{commitments} transaction, $$tx_2(i)=\big[\{ X_{i,k} \}_{k=0}^t \big]$$

If $P_i$ fails to publish $tx_2(i)$ in time, any participant $j$ can file a complaint, $cm_1(j,i)$, against $P_i$. If $X_{i,0}$ given in $tx_2(i)$ does not match its hash commitment from $tx_1(i)$, or otherwise some commitment fails the group membership test\footnote{In~\cite{DKGRevisited,DKG} group membership of the public commitments (verifying that $X_{i,k}\in \langle g \rangle $) was omitted, but seems necessary for the proofs. We therefore add this verification check in Escrow-DKG.}, any participant $j$ can file a complaint $cm_2(j,i)$. 
            
\item[3] Sub-shares. Every participant $P_i$ locally computes her sub-shares $x_{i,j}=f_i(j)$ for all $j$. For $j \neq i$ she encrypts $x_{i,j}$ using $P_j$'s public encryption key and publishes the corresponding \emph{sub-share} transactions,
$$tx_3(i,j)=\Big[ ENC_{\gamma^{\xi_i}} \big( x_{i,j} \big) \Big]$$

If $P_i$ fails to publish $tx_3(i,j)$ in time, any participant $l$ can file a complaint, $cm_3(l,i,j)$ against her. 
            
\item[4] Sub-shares verification. Every participant $P_i$ locally verifies that the sub-shares intended to her are consistent with the public commitments. That is, she checks that for all $j$,
\begin{equation*}
g^{x_{j,i}}=\prod_{k=0}^t \big(X_{j,k}\big)^{i^k}
\end{equation*}

If the equality for some $j$ does not hold, $P_i$ can file a \textit{sub-share} complaint, $cm_4(i,j,\xi_i)$ against $P_j$.





\item[5] Conclusion. If no complaints were filed during the course of the protocol, Escrow-DKG is said to have \emph{completed successfully}. Each $P_i$ can compute her key share $x_i=\sum_{j=1}^n x_{j,i}$, as well as the global public key $X_0=\prod_{i=1}^n X_{i,0}$. She also computes the public key share $g^{x_j}$ for all participants $P_j$. Deposits are kept for the application stage. 

If some complaint is filed, it is arbitrated by the escrow, and the protocol can be relaunched with the same set of participants, excluding the one slashed by the escrow.

\end{enumerate}

\end{mdframed}
\end{minipage}
}
}

\subsubsection*{Escrow-DKG.}
Our DKG protocol, \emph{Escrow-DKG}, kick-starts the 
cryptosystem. 
Participants must first \emph{enroll} by depositing funds to the escrow. The protocol then proceeds in \emph{phases}, roughly corresponding to those of Ped-DKG. All data submitted to the escrow is publicly available to all participants. In Fig.~\ref{Escrow-DKG-fig} we depict the complete description of the protocol. 


\FloatBarrier

As in Ped-DKG, when participants detect unintended behavior they may file complaints. In Escrow-DKG, when a complaint is filed the protocol enters a dispute phase between the \emph{prover} (the entity filing the complaint and proving it) and the \emph{challenger} (the entity who challenges the complaint). For the escrow to arbitrate the complaint, the prover submits all necessary data and the escrow executes the necessary computations and rules whether the complaint is just (to accommodate arbitration to Ethereum's limitations, App.~\ref{Escrow-DKG-Ethereum} illustrates an interactive arbitration protocol in which both sides, the prover and the challenger, take active role). Then, the contract slashes and rewards accordingly. See Table~\ref{tab:complaints} for details. Once a complaint has been filed, the protocol fails. If desired, the protocol can be easily relaunched. Specifically, we do not consider failing the DKG as the end of the system's lifetime. As we discuss in Sec.~\ref{Robustness}, we expect failing the DKG to be very rare as there is no gain in doing so.

\subsubsection*{Application.}
If Escrow-DKG did not fail, we say it has completed successfully and participants can begin using their key shares, e.g., in producing threshold signatures. We emphasize the escrow keeps the deposits in order to incentivize post-DKG behavior. 

If required by the cryptosystem (as would often be), any public information generated by Escrow-DKG can be submitted to the escrow. For example, one may submit the \emph{public key}, $X_0$. Authenticity of public data submitted to the escrow is subject to complaints (see $cm_5$ in Table~\ref{tab:complaints}). However, after Escrow-DKG has completed successfully, for robustness, complaints do not fail the cryptosystem and might only invoke slashing of deposits\footnote{We note that the escrow service has no deterrence over a slashed participant, $P_j$. To handle this, the corresponding secret key share, $f(j)$, is revealed and from that point on the system operates as a $(t-1,n-1)$-threshold cryptosystem. $P_j$'s slashed deposit can be used to incentivize participants to help reveal $f(j)$.}.

The main contribution of our framework is its ability to economically discourage collusion, i.e., undesired cooperation of $t+1$ participants. We introduce the option of \emph{framing} in case collusion took place during the application stage. At any point in time, any (bonded) entity may publish a framing complaint containing evidence of collusion (framings are given in Table~\ref{tab:complaints} as $fm$). Unlike regular complaints, framing is not targeted against a specific participant but rather against the entire system. We are compelled to do so as collusion evidence cannot reveal any information regarding the (true) identities of the colluding parties. For this reason, framing results in \emph{all} deposits being burnt, except for the reward given to the framer\footnote{We do not allow framing a specific participant by reveling her private data (e.g., $f(j)$), even though it would definitely serve as means to discourage data sharing between entities. The problem is that it would also open the opportunity for anyone controlling the cryptosystem (i.e., knowing $f$) to frame honest participants, increasing the profits from collusion. As a matter of fact, ``small collusions'' (mining pools in our Bitcoin analogy) are not necessarily bad if the different entities keep having individual interests and are incentivized to frame in case the collusion became ``big''.} (see last row in Table~\ref{tab:complaints}). Framing opens a new and lucrative opportunity for entities taking part in collusion. As we show in the following section, when the system parameters are tuned correctly, entities are better off framing than colluding. Realizing this, entities are discouraged to collude, fearing another entity would frame.

\paragraph{Example I.}
In our lottery example, collusion is manifested by the generation of the signature that determines the winner \emph{during the ticket sale}. The $t+1$ colluding parties generating the signature can share the lottery prize by buying the winning ticket. On the other hand, any entity within the collusion can frame and enjoy a significant reward, while all other participants are slashed. 

\paragraph{Example II.}
In Satoshi's example, collusion is manifested by reconstructing the secret \emph{while Satoshi is still alive}. Participants should be discouraged from doing so, fearing one of them would frame the others to enjoy an additional reward.

\begingroup
\setlength{\tabcolsep}{10pt} 
\renewcommand{\arraystretch}{1.5} 

\begin{table}[h]
\caption {Complaints and Framing}
\label{tab:complaints} 

\caption*{The table depicts the details of complaints, arbitration and the outcomes in terms of slashing and rewarding. Notice that all complaints result in some value being burnt: $(n-t)\Delta$ in justified framing and $\Delta / 2$ in all other cases.}

\small
\begin{adjustwidth}{-.5in}{-.5in}  
\begin{tabular}{|p{1.5cm}|p{3cm}|p{4cm}|p{3.2cm}|}
\hline
Title & Claim & Arbitration procedure & \parbox{3cm}{\vspace{.5\baselineskip} Incentives (justified / \\ unjustified)}  \\ [0.5ex] 
\hline \hline

\multirow{2}{*}{\parbox{1.5cm}{\phantom \\ $cm_1 (j,i)$ \\or\\ $cm_3 (j,i,l)$}} & \multirow{2}{*}{\parbox{3cm}{$P_i$ did not publish $tx_1(i)$ or $tx_3(i,l)$.}}  &  \multirow{2}{*}[-\aboverulesep]{\parbox{4cm}{Trivial check.}}  & \parbox{3.2cm}{\vspace{.5\baselineskip} $P_i:-\Delta$ \\ $P_m:+\frac{\Delta}{2(n-1)}$ $\forall m\ne i$} \\
\cline{4-4}
 & & & \parbox{3.2cm}{\vspace{.5\baselineskip}$P_j:-\Delta$ \\ $P_m:+\frac{\Delta}{2(n-1)}$ $\forall m\ne j$} \\
 \hline

 \multirow{2}*{$cm_2 (j,i)$} & \multirow{2}{*}{\parbox{3cm}{$P_i$ published invalid hash commitment.}} &  \multirow{2}{*}[-\aboverulesep]{\parbox{4cm}{Compute $H(X_{i,0})$ from $tx_2(i)$ and compare it with the value in $tx_1(i)$.}}  & \parbox{3.2cm}{\vspace{.5\baselineskip}$P_i:-\Delta$ \\ $P_m:+\frac{\Delta}{2(n-1)}$ $\forall m\ne i$} \\
\cline{4-4}
 & & & \parbox{3.2cm}{\vspace{.5\baselineskip}$P_j:-\Delta$ \\ $P_m:+\frac{\Delta}{2(n-1)}$ $\forall m\ne j$} \\
 \hline

 \multirow{2}*{$cm_4 (j,i,\xi_j)$} & \multirow{2}{*}{\parbox{3cm}{$P_i$ published an inconsistent sub-share $x_{i,j}$. }} &  \multirow{2}{*}[-\aboverulesep]{\parbox{4cm}{Decrypt $tx_3(i,j)$ using $\xi_j$, and compute $g^{x_{i,j}}$. Compare with $\prod_{k=0}^t (X_{i,k})^{j^k}$ using data from $tx_2(i)$.}}  & \parbox{3.2cm}{\vspace{.5\baselineskip} $P_i:-\Delta$\\ $P_j:+\frac{\Delta}{2}$} \\
\cline{4-4}
 & & & \parbox{3.2cm}{\vspace{.5\baselineskip}$P_j:-\Delta$ \\ $P_m:+\frac{\Delta}{2(n-1)}$ $\forall m\ne j$} \\
 \hline
 
 \multirow{2}*{$cm_5 (j,i)$} & \multirow{2}{*}[-\aboverulesep]{\parbox{3cm}{$P_i$ published incorrect public data $tx_{pub}(i)$ (e.g., $g^{f(i)}$).}} &  \multirow{2}{*}[-\aboverulesep]{\parbox{4cm}{Use the commitments to compute the desired value (e.g., $\prod_{j=1}^n g^{x_{j,i}})$, and compare with $tx_{pub}(i)$.}}  &\parbox{3.2cm}{\vspace{.5\baselineskip}$P_i:-\Delta$ \\ $P_m:+\frac{\Delta}{2(n-1)}$ $\forall m\ne i$} \\
\cline{4-4}
 & & & \parbox{3.2cm}{\vspace{.5\baselineskip}$P_j:-\Delta$\\ $P_m:+\frac{\Delta}{2(n-1)}$  $\forall m\ne j$} \\
 \hline

 \multirow{2}*{$fm(j,data)$} & \multirow{2}{*}[-\aboverulesep]{\parbox{3cm}{$data$ serves as collusion evidence (e.g., $data=f(0)$).}} &  \multirow{2}{*}[-\aboverulesep]{\parbox{4cm}{Verify the evidence (e.g., check $g^{data}=\prod_{i=1}^n X_{i,0}$).}}  &\parbox{3.2cm}{\vspace{.5\baselineskip}$P_j:+t\Delta$\\ $P_m:-\Delta$ $\forall m\ne j$} \\
\cline{4-4}
 & & & \parbox{3.2cm}{\vspace{.5\baselineskip} $P_j:-\Delta$ \\ $P_m:+\frac{\Delta}{2(n-1)}$ $\forall m\ne j$ } \\
 \hline

\end{tabular}

\end{adjustwidth}

\end{table}
\endgroup

\section{Properties}
\label{properties}
The protocol satisfies three properties that we discuss hereafter. Fundamentally, we adapt the original requirements of a DKG protocol presented in Sec.~\ref{Ped-DKG} to our economic model and prove them assuming that the entities are rational.

\subsection{Correctness}

\begin{claim} [Correctness]
If Escrow-DKG (i.e., only the DKG stage) completed successfully, then the secret key shares and the public data produced by the protocol admit $(C1)$ and $(C2)$. 
\end{claim}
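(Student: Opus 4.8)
The plan is to first turn the hypothesis ``completed successfully'' --- which by definition means that no complaint was ever filed --- into a list of consistency guarantees on the published data, and then to obtain (C1) and (C2) by the standard Pedersen-style reconstruction argument applied to these guarantees.

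For the first step I would argue by rationality together with the incentive layer of Table~\ref{tab:complaints}. The key point is that during the DKG stage no reward has yet been released (the full value $R$ is still locked for the application phase, and the complaint mechanism never lets a participant gain a \emph{larger} share of $R$), so a participant who has grounds for a justified complaint faces no opportunity cost: filing it yields a strictly positive marginal payoff ($+\frac{\Delta}{2(n-1)}$ for $cm_1,cm_2,cm_3$ and $+\frac{\Delta}{2}$ for $cm_4$), and in the case of $cm_4$ the only ``cost'', namely the revelation of $\xi_j$, is moot since the complaint fails the run and it is relaunched with fresh keys. Hence, if the run completed with no complaint, then for every enrolled $P_i$: \textbf{(i)} $tx_1(i)$, $tx_2(i)$ and every $tx_3(i,j)$ were published in time (otherwise some $cm_1$/$cm_3$ would be justified); \textbf{(ii)} the value $X_{i,0}$ of $tx_2(i)$ matches the hash commitment in $tx_1(i)$ and each $X_{i,k}\in\langle g\rangle$ (otherwise $cm_2$); and \textbf{(iii)} for every $j$ the sub-share $x_{i,j}$ decrypted from $tx_3(i,j)$ satisfies $g^{x_{i,j}}=\prod_{k=0}^{t}(X_{i,k})^{j^{k}}$ (otherwise $cm_4(j,i,\xi_j)$). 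I expect this step --- arguing that a rational participant who \emph{could} file a justified complaint in fact does, with no more lucrative alternative available at the DKG stage --- to be the main obstacle; the rest is essentially the textbook Ped-DKG analysis.

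Now fix notation: let $f_i(z)=\sum_{k=0}^{t}a_{i,k}z^{k}$ be $P_i$'s local polynomial, so that $X_{i,k}=g^{a_{i,k}}$ by guarantee~(ii); set $f:=\sum_{i=1}^{n}f_i$, a polynomial of degree at most $t$ over $Z_q$, and $x:=f(0)=\sum_{i=1}^{n}a_{i,0}$. Since $G=\langle g\rangle$ has prime order $q$, $g^{a}=g^{b}$ forces $a\equiv b\pmod{q}$; applying this to guarantee~(iii) gives $x_{i,j}=f_i(j)$ for all $i,j$, hence $P_i$'s key share is $x_i=\sum_{j=1}^{n}x_{j,i}=\sum_{j=1}^{n}f_j(i)=f(i)$. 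This yields both reconstruction procedures of (C1): from any $t+1$ correct shares $f(j_1),\dots,f(j_{t+1})$, Lagrange interpolation returns the unique polynomial of degree at most $t$ through these points, which is $f$, so it outputs $f(0)=x$; and given $n$ shares of which at least $t+1$ are correct, together with the public commitments, one first identifies the correct ones by testing $g^{x_i}\stackrel{?}{=}\prod_{\ell=1}^{n}\prod_{k=0}^{t}(X_{\ell,k})^{i^{k}}$ (which, again by prime order, holds iff $x_i=f(i)$) and then interpolates $t+1$ of them; both procedures run in polynomial time. For (C2), since $0^{0}=1$ and $0^{k}=0$ for $k\ge1$, we get $g^{x}=g^{f(0)}=\prod_{i=1}^{n}g^{a_{i,0}}=\prod_{i=1}^{n}X_{i,0}$, computed directly from the public commitments, and this is exactly the element $g^x$ determined by the secret $x$ of (C1), since that $x$ was defined as $f(0)$ with the same coefficients $a_{i,0}=\log_g X_{i,0}$. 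Therefore both (C1) and (C2) hold whenever Escrow-DKG completes successfully.
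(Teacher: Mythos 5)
Your overall structure matches the paper's: first argue that rationality rules out any run that completes successfully while a justified complaint was available, then turn the resulting consistency of the published data into (C1) and (C2). Your second step is fine and in fact more self-contained than the paper's, which simply maps a complaint-free Escrow-DKG run onto a Ped-DKG run with all-correct participants and imports Ped-DKG's correctness, whereas you redo the Pedersen/Feldman algebra explicitly (defining the exponents via the commitments, using prime order to get $x_{i,j}=f_i(j)$, Lagrange interpolation, and $g^x=\prod_i X_{i,0}$). That part I would accept as a valid unpacking of the paper's reduction.

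The gap is in your first step, and it is exactly the point the paper flags as non-trivial. You claim a participant with grounds for a justified complaint ``faces no opportunity cost'' because no reward has yet been released. But filing \emph{any} complaint fails the DKG, and by doing so the complainer forfeits her \emph{expected} application-stage reward $\frac{\alpha R}{n}$: the relaunched system may carry a smaller value $R$ (or none at all, e.g.\ lost credibility and ticket sales in the lottery example), so the relevant comparison is the complaint reward versus $\frac{\alpha R}{n}$, and the former can be much smaller --- in the paper's own numbers, $\frac{\Delta}{2(n-1)}\approx 20$ against $\frac{\alpha R}{n}=1000$ for a $cm_2$-type complaint. So ``strictly positive marginal payoff'' does not close the argument. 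The paper resolves this with a worst-case ($R$ drops to zero) analysis \emph{per complaint type}: for $cm_4$ the complainer's key share is useless anyway, so her expected reward is already lost and the immediate payment decides; for the public-data complaints ($cm_2$, $cm_5$) external entities, who have nothing to lose, would complain regardless, so silence buys a participant nothing; and for missing-data complaints ($cm_1$, $cm_3$) the escrow itself can detect the omission. You need an argument of this kind (or an explicit assumption bounding the complaint reward below by the forgone expected reward) before you can conclude guarantees (i)--(iii) from ``no complaint was filed.''
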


\noindent The proof relies on the following lemma.

\begin{lemma} \label{priftableToComplainLemma}
During Escrow-DKG, if an entity has the possibility to file a justified complaint, Escrow-DKG would not complete successfully.
\end{lemma}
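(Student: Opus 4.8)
The plan is to reduce the lemma to a single observation: for whichever entity can file a justified complaint, filing it is the utility-maximizing action. Recall that, by the conclusion step of Escrow-DKG (Fig.~\ref{Escrow-DKG-fig}), the protocol ``completes successfully'' precisely when no complaint is filed during its course; so it suffices to exhibit, under the premise, a rational entity that files a complaint.

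First I would fix an entity $E$ and a moment in the run at which $E$ holds the evidence for a justified complaint $cm$ of one of the types in Table~\ref{tab:complaints}, and compare $E$'s payoff from filing $cm$ now against the payoff from never filing it. For the first: Table~\ref{tab:complaints} shows that a justified complaint transfers the accused participant's deposit, giving the prover a strictly positive immediate reward --- $\frac{\Delta}{2}$ for $cm_4$ and $\frac{\Delta}{2(n-1)}$ otherwise (the prover being one of the $P_m$, $m\neq i$, that split the reward) --- at negligible cost in the idealized escrow (and at a cost dominated by the reward in the Ethereum-compatible setting of App.~\ref{Escrow-DKG-Ethereum}); furthermore, a justified complaint only removes the slashed participant and lets the protocol be relaunched with the rest, so $E$ retains its deposit and its claim to the expected future reward $\frac{\alpha R}{n}$. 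For the second: I would argue that staying silent cannot do strictly better, because a deviation of any of the forms behind $cm_1$--$cm_4$, if left uncontested, either leaves the protocol stuck with all deposits locked or produces public data inconsistent with the shares, which later exposes a participant to a $\Delta$ slashing via $cm_5$ in the application stage --- and in no case does the flawed outcome let $E$ earn more than the legitimate $\frac{\alpha R}{n}$ through the escrow. Hence filing weakly dominates, a rational $E$ files $cm$, and Escrow-DKG does not complete successfully.

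The step I expect to be the main obstacle is the case where the reward for the available complaint is shared by all non-accused participants ($cm_1,cm_2,cm_3,cm_5$), so that each entity holding the evidence would rather some other such entity bear the (vanishing) cost of filing --- a free-rider situation. I would resolve it by noting that in the idealized model this cost is zero, so each such entity is at worst indifferent and strictly prefers the post-complaint state (deposits unlocked, correct relaunch), which suffices under any fixed tie-break among the entities holding the evidence; and that, independently, strict incentives are already in force for $cm_4$ and whenever inaction would leave deposits locked. A related point worth making explicit is that a misbehaving participant cannot buy the prover's silence: an off-chain side payment is unenforceable while the complaint reward is guaranteed by the escrow --- the same prisoner's-dilemma logic the paper later applies to collusion.
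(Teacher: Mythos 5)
There is a genuine gap: your argument rests on the claim that filing a justified complaint is essentially free for the prover --- ``$E$ retains its deposit and its claim to the expected future reward $\frac{\alpha R}{n}$'' because the protocol is simply relaunched --- and hence that filing weakly dominates silence. But in the paper's model this is exactly what cannot be assumed. A complaint fails Escrow-DKG, and the value $R$ is attached to the specific cryptosystem launched with a specific DKG run; upon relaunch $R$ may shrink, in the worst case to zero (the paper's lottery example: a failed DKG hurts credibility and ticket sales). So the real opportunity cost of complaining is not the (negligible) filing cost you discuss, but the possible loss of the complainer's own expected reward $\frac{\alpha R}{n}$, which for reasonable parameters dwarfs the complaint reward $\frac{\Delta}{2(n-1)}$. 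Your comparison ``the flawed outcome never lets $E$ earn \emph{more} than $\frac{\alpha R}{n}$'' points the wrong way: the danger is that filing may make $E$ earn \emph{less} than $\frac{\alpha R}{n}$, so silence could be strictly better for an entity not personally harmed by the deviation. Your resolution of the free-rider case inherits the same flaw, since it again treats the cost of complaining as zero.

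The paper's proof is organized precisely around neutralizing this cost, case by case under the worst-case assumption that $R$ drops to zero after a complaint. For a sub-share mismatch ($cm_4$), the complainer's key share would be useless anyway, so her claim to $\frac{\alpha R}{n}$ is already forfeit and the immediate reward decides. For public-data mismatches ($cm_2$, $cm_5$), \emph{external} entities, who hold no stake in $R$, strictly profit from complaining and will do so; knowing failure is inevitable, participants are incentivized to file first. For missing data ($cm_1$, $cm_3$), the escrow itself detects the omission, so again the run cannot survive and the reward suffices. To repair your proof you would need an argument of this shape --- showing either that the complainer's stake in the current run is already lost, or that the run's failure is inevitable independent of her action --- rather than the weak-dominance claim as stated. (Your closing observation that side payments cannot buy silence is fine, but it is supplementary to, not a substitute for, this missing step.)
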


\begin{proof}
We prove that if an entity has the chance to file a justified complaint against another entity, she would do so. This might seem trivial as filing a justified complaint yields an immediate profit to the prover. However, in a pessimistic scenario, since complaints require relaunching the DKG, they might reduce the value $R$---according to our model, the value $R$ is attached to a specific cryptosystem that is launched with a specific DKG run; relaunching implicitly means $R$ might change (e.g., in the lottery example, if the DKG fails, it might reduce the system's credibility and lead to a decrease in ticket sales). We show that even under the worst case assumption that $R$ reduces to zero after the complaint, it is still beneficial for entities to complain when possible (even though it implies they lose their own profit, $\frac{\alpha R}{n}$). There are three complaint categories to consider. 
\begin{enumerate}
\item Private data mismatch (sub-share complaint $cm_4$). A participant $P_j$ has clear motivation of having a secret key share that matches the public key. If one of $P_j$'s sub-shares $f_i(j)$ does not match $P_i$'s commitments, she would end up with a useless key share that would not allow her to participate in the protocol. De facto, $P_j$ would not be a participant and would lose her expected profit\footnote{We implicitly assume that the system rewards only active participants. 
}, $\frac{\alpha R}{n}$. Thus, the immediate reward of $\frac{\Delta}{2(n-1)}$ suffices for her to complain.

\item Public data mismatch ($cm_2$ and $cm_5$). In this case any entity, including external entities, can complain. Since external entities have nothing to lose and will gain a profit from complaining, they will do so\footnote{To prevent spam and unjust complaints, external complaints would be accompanied by a deposit that would be slashed if they are found unjust.}. Participants know \emph{someone} would complain and are thus incentivized to do so themselves.

\item Missing data ($cm_1$ and $cm_3$). While we do mention these complaints, in practice the escrow service itself can detect missing data. This serves enough of an incentive to complain for the immediate profit. \qed
\end{enumerate}
\end{proof}

\begin{proof} [Correctness] 
From the previous lemma we can deduce that no entity could have made a justified complaint during a successful run of Escrow-DKG. Such a run is equivalent to a Ped-DKG run in which all the participants are correct. The equivalence is given by mapping the participants in Escrow-DKG to those in Ped-DKG, and assuming they sample the same local polynomials $f_i$. If all Ped-DKG participants are correct, then no one complains and the data (public and private) produced by the protocol is identical to that in the Escrow-DKG run. $(C1)$ and $(C2)$ in Ped-DKG imply $(C1)$ and $(C2)$ in Escrow-DKG. 
\qed 
\end{proof}

Handling $(C3)$ is more tricky. Hashing $g^{a_{i,0}}$ in the enrollment transaction guarantees that when $P_j$ samples her partial secret, $a_{j,0}$, she is unaware of $g^{a_{i,0}}$ (for any $i$) and cannot relate it to those of the other participants. Assuming one participant sampled her partial secret uniformly in random is enough to conclude that the sum is also distributed uniformly. However, the option of Escrow-DKG to fail (i.e., not to complete successfully) opens the opportunity for participants to try and manipulate $g^x$. If a participant does not like the sampled $g^x$, she can fail Escrow-DKG and have it re-sampled in the subsequent run. We further relate to this issue in Sec.~\ref{Robustness}.

\subsection{Collusion resistance}
\label{collusion}


Obviously, a permissionless threshold cryptosystem has scenarios in which it cannot be trusted, e.g., when a single entity controls $t+1$ participants in the DKG. While we cannot prevent such scenarios, our goal is to articulate concrete economic conditions as to when a given cryptosystem is trustworthy and collusion resistant. We say that a threshold cryptosystem is \emph{collusion resistant} if it adheres certain conditions which make collusion a non-profitable strategy\footnote{In a collusion resistant cryptosystem, the secrecy property defined in Sec.~\ref{Ped-DKG} is satisfied. However, secrecy is not enough---collusion does not imply revealing $x$.}. The following claim gives a strong condition in this spirit.

\begin{claim} [Collusion resistance] \label{CollusionResistanceClaim}
If $R<\frac{(t-1)\Delta}{2}$ then ``two-entity'' collusion (i.e., collusion involving exactly two entities) can only take place between two entities that invested at least $\frac{(t+1) \Delta}{2}$ each. 
\end{claim}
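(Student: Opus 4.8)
The plan is to reason directly about the utilities of the two colluding entities, call them $A$ and $B$, and show that unless each of them has a deposit stake of at least $\frac{(t+1)\Delta}{2}$ in the system, one of them has a profitable deviation (namely, framing), which contradicts the assumption that a stable two-entity collusion took place. First I would set up notation: suppose $A$ controls a set of participants whose deposits sum to $D_A$ and $B$ controls participants whose deposits sum to $D_B$, with $D_A + D_B$ being at most the total deposited $n\Delta$ among the colluding-relevant participants (here all deposits are the fixed $\Delta$, so $D_A = a\Delta$ and $D_B = b\Delta$ for integers $a,b$ with $a+b \le n$, and we need $a+b \ge t+1$ for the two entities to jointly control the cryptosystem). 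Because the collusion is ``successful'', each of $A$ and $B$ obtains at most the whole redistributable value, so the gain from colluding for, say, $A$ is bounded above by roughly $R$ (the extra reward collusion can capture beyond the legitimate $\frac{\alpha R}{n}$ per participant), minus nothing yet.

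Next I would compute the payoff from the framing deviation using the last row of Table~\ref{tab:complaints}. If $A$ frames, then every participant $P_m$ with $m$ not the framer loses $\Delta$, and the designated framer $P_j$ gains $t\Delta$. Since $A$ controls $a$ participants, $A$ can nominate one of her own as the framer; the net change to $A$'s balance from framing is $+t\Delta - (a-1)\Delta$ coming from her own participants (one gains $t\Delta$, the other $a-1$ lose $\Delta$ each), i.e. $A$'s deposits are worth $(t - a + 1)\Delta + D_A$... — more carefully, $A$'s total position after framing is: she recovers nothing of the burnt pool, her $a$ participants net $t\Delta - (a-1)\Delta$ relative to zero, so framing yields $A$ a utility of $(t-a+1)\Delta$ against having staked $D_A = a\Delta$, a net of $(t - a + 1)\Delta - a\Delta$? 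No: the deposit $D_A$ was already sunk, so the relevant comparison is the final balance under ``collude and nobody frames'' versus ``collude then $A$ frames''. Under the former, $A$ ends with her deposits back plus her share of the illegitimate reward, at most $D_A + (\text{something} \le R)$. Under the latter, $A$ ends with $t\Delta - (a-1)\Delta$ total (her framer participant gets $t\Delta$, her other $a-1$ participants lose their $\Delta$, the colluders' reward evaporates because the escrow burns everything). So framing is the better option for $A$ whenever $t\Delta - (a-1)\Delta > D_A + R$, i.e. whenever $(t - a + 1)\Delta - a\Delta > R$, that is $(t + 1 - 2a)\Delta > R$. Hence if $a < \frac{t+1}{2} - \frac{R}{2\Delta}$, entity $A$ strictly prefers to frame; using the hypothesis $R < \frac{(t-1)\Delta}{2}$ this threshold is strictly larger than $\frac{t+1}{2} - \frac{t-1}{4} = \frac{t+3}{4}$, and a cleaner bound: if $a\Delta < \frac{(t+1)\Delta - R}{2}$ then $A$ frames. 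Symmetrically the same holds for $B$ with $b$ in place of $a$. Therefore in a stable two-entity collusion we must have both $D_A = a\Delta \ge \frac{(t+1)\Delta - R}{2}$ and $D_B \ge \frac{(t+1)\Delta - R}{2}$; invoking $R < \frac{(t-1)\Delta}{2}$ one more time should push this up to the clean statement $D_A, D_B \ge \frac{(t+1)\Delta}{2}$ — I would need to check the exact arithmetic here, since the claim as stated wants exactly $\frac{(t+1)\Delta}{2}$, which suggests the intended bound on the framing gain for a single-participant framer ($a=1$) is $t\Delta$ against a loss of $\Delta$, net $t\Delta$, to be compared with the collusion gain which must then be shown to be at most $R + \Delta \cdot(\text{stake})$, forcing stake $\ge \frac{(t+1)\Delta}{2}$; I expect the bound $R < \frac{(t-1)\Delta}{2}$ is exactly what makes the two inequalities combine correctly.

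I expect the main obstacle to be pinning down precisely what ``the utility of a colluding entity'' is — i.e. bounding from above how much an entity can extract by colluding (at most $R$, or at most $R$ plus the return of its own deposits, depending on whether framing by the \emph{other} party is also on the table) — and making sure the comparison ``collude vs.\ frame'' is done against the right baseline, because all the $\Delta$'s are sunk costs and one must be careful whether they are counted. A secondary subtlety is that when $A$ frames, $B$'s participants are slashed too, so $B$'s best response is not passive; but for the claim we only need the \emph{existence} of a profitable unilateral framing deviation for at least one of the two entities when its stake is too small, which is enough to rule out that configuration as an equilibrium. The rest is the bookkeeping of the three quantities $t\Delta$ (framer reward), $(a-1)\Delta$ or $(b-1)\Delta$ (self-inflicted slashing on one's own other participants), and $R$ (upper bound on collusion profit), combined with the hypothesis $R < \frac{(t-1)\Delta}{2}$.
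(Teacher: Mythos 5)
There is a genuine gap, and it comes in two parts. First, your bound on the collusion payoff is too loose. You upper-bound entity $A$'s gain from a successful collusion by (roughly) $R$ plus her returned deposits, whereas the paper's model (see the caption of the payoff matrix) has the colluders split $R$ according to their weights, so $A$'s collusion payoff is $\frac{Ra}{a+b}\leq\frac{Ra}{t+1}$. With the proportional share, the comparison ``frame vs.\ don't frame'' gives: if $a<\frac{t+1}{2}$ then the framing payoff $(t-a)\Delta>\frac{(t-1)\Delta}{2}$ beats $\frac{Ra}{t+1}<\frac{R}{2}$, already under the mild condition $R<(t-1)\Delta$; this is exactly where the threshold $a\geq\frac{t+1}{2}$, i.e.\ a stake of $\frac{(t+1)\Delta}{2}$, comes from. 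With your bound of $R$ you only get the condition $(t+1-2a)\Delta>R$, which (as you yourself note) lands you near $\frac{(t+3)\Delta}{4}$ rather than $\frac{(t+1)\Delta}{2}$, so your arithmetic cannot close to the claimed constant.

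Second, and more importantly, the unilateral-deviation argument is only half of the paper's proof. Showing that a small-stake $A$ prefers to frame does not yet prove the claim, because $B$ anticipates this and the colluders can try to restore the deal: $A$ can credibly commit not to frame by locking an extra amount $\Delta_A$ in a side contract (burned upon framing), or $B$ can transfer part of her own marginal collusion profit $\frac{Rb}{t+1}-\frac{\alpha Rb}{n}$ to $A$. The paper analyzes both: in the first case the no-framing condition $a\Delta+\Delta_A> t\Delta-\frac{Ra}{t+1}$ forces $A$'s total investment above $\frac{(3t-1)\Delta}{4}>\frac{(t+1)\Delta}{2}$ (using $R<\frac{(t-1)\Delta}{2}$), and in the second case the hypothesis on $R$ rules the transfer out even with $\alpha=0$. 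This is essential because the claim speaks of total \emph{investment}, not merely of deposits, and because ``collusion can take place'' must exclude these commitment devices. Your proposal omits this entire branch, so even with the payoff bound fixed it would not establish the statement; the worry you flag about ``what exactly the colluding entity can extract'' is resolved in the paper precisely by this explicit side-contract/transfer analysis rather than by a coarse bound of $R$.
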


The claim implies that one should trust the system to be collusion resistant if they \emph{believe} that there are no two entities controlling at least $\frac{t+1}{2}$ DKG participants each\footnote{The trustworthiness question we raise here is fundamental also in Bitcoin. One should trust Bitcoin only if they believe no single entity is controlling more than $50\%$ of the hash power. The open nature of Bitcoin and the ability of anyone to start mining makes this assumption reasonable.}. 





\begin{proof}
Consider the case of two separate entities $A$ and $B$ controlling $a$ and $b$ DKG participants respectively. Assume $a+b \geq t+1$ so that if the two entities collude they break the system. It is enough to show that for such collusion to be profitable, $A$ and $B$ would each have to invest at least $\frac{(t+1)\Delta}{2}$. 

We analyze the payoff matrix of $A$ and $B$ in case collusion took place 
(see Table~\ref{tab:payoff_matrix}). Both entities have two options---to frame or not to frame. 

\begin{table}[h]
\caption{Framing Payoff Matrix}
\label{tab:payoff_matrix}

\caption*{The case where both $A$ and $B$ do not frame (bottom-right) captures the pressure to collude---instead of earning their fair share ($A$ should earn $a \cdot \frac{\alpha R}{n}$), the total value held by the system, $R$, is now split solely between these two entities (according to their weights). The other cases are self explanatory from the last line in Table~\ref{tab:complaints}.}

\begin{adjustwidth}{-.15in}{-.5in}  

\begin{tabular}{|l||*{2}{c|}}

\hline
\backslashbox{Entity A}{Entity B}
&\makebox[4.5em]{Frame}&\makebox[4.5em]{Not frame}\\\hline\hline
Frame & Both have $0.5$ probability of framing first & \backslashbox{\ \ \ \ \ +$(t-a)\Delta$}{\ \ \ \ \  $-b\Delta$ \ \ \ \ \ \ } \\\hline 
Not frame &\backslashbox{\ \ \ \ \ \ \ \ \ \ \ $-a\Delta$ \ \ \ \ \ \ \ \ }{$+(t-b)\Delta$\ \ \ \ \ \ \ \ \ } & \backslashbox{$+a\cdot R/(t+1)$}{$+b\cdot R/(t+1)$}\\\hline
\end{tabular}
\end{adjustwidth}
\end{table}

In case $B$ chose to frame, $A$ is obviously better off framing as well (whoever gets her complaint processed first profits while the other gets slashed). Otherwise, in case $B$ does not frame, $A$'s utility can either be $(t-a)\Delta$ (by framing), or $\frac{R a}{a+b} \leq \frac{R a}{t+1}$ (by not framing)\footnote{Not surprisingly, the more participants an entity controls the less she is encouraged to frame.}. In case $a < \frac{t+1}{2}$, A is better off framing (for this we only need $R < (t-1)\Delta$). $B$'s view is symmetric, which means only if $a$ and $b$ both are at least $\frac{t+1}{2}$ no framing would occur.

We are not done yet. If $a < \frac{t+1}{2}$ then $B$ would disagree to collude with $A$---$B$ knows $A$'s dominant strategy is to frame. For $A$ to convince $B$ she will not frame, her payoff matrix must change---either her profit from framing reduces or her profit from a successful collusion increases. 

In the first case, $A$ would have to convince $B$ of an additional loss, $\Delta_A$, in case she (or anyone else for that matter) framed\footnote{One way to guarantee such loss would be to deposit $\Delta_A$ in a side contract which is instructed to burn the deposit in case framing took place.}. Her total investment would then be $a\Delta + \Delta_A$ and in order to be discouraged from framing, this needs to be larger than $t\Delta - \frac{Ra}{t+1}$. Simple arithmetics shows that $A$'s investment would have to be larger than $\frac{(3t-1)\Delta}{4}$ (after substituting $R$ with $\frac{(t-1)\Delta}{2}$)\footnote{We note a simple trade-off---the lower $R$'s bound is (in terms of $\Delta$), the larger $\Delta_A$ becomes. In essence this means that locking funds in side contracts is less lucrative, as more funds need to be invested in order to convince $B$ that framing is unprofitable.}. 

Since $A$'s investment is higher than $\frac{(t+1)\Delta}{2}$, we conclude that in this scenario collusion is profitable only if both sides invest more than $\frac{(t+1)\Delta}{2}$. To complete the proof we have to analyze $B$'s option to offer $A$ additional profits should the collusion succeed. 
$B$ could only offer her marginal profit from the collusion (relative to not colluding), which is $\frac{Rb}{t+1} - \frac{\alpha Rb}{n}$. This sum, combined with $A$'s profit from a successful collusion, $\frac{Ra}{t+1}$, would have to be larger than $A$'s minimal profit from framing, $(t-\frac{t+1}{2})\Delta$ (where the $\frac{(t+1)\Delta}{2}$ subtracted is derived from the fact that, in the setting suggested by the claim, $A$ is not willing to invest more than $\frac{(t+1)\Delta}{2}$). Our assumption on $R$ does not allow this even if $\alpha=0$. \qed

\end{proof}


Interestingly, collusion resistance does not depend on the parameter $\alpha$ or on the ratio $t/n$---only the ratio between $R$ and $\Delta$ matters. In practice though, the reward, $\frac{\alpha R}{n}$, needs to be attractive for participants to join. 




\begin{corollary}
In the setting of claim~\ref{CollusionResistanceClaim}, if no entity has invested at least $\frac{(t+1)\Delta}{2}$ then collusion cannot take place.
\end{corollary}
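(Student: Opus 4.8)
The plan is to bootstrap from the two-entity analysis of Claim~\ref{CollusionResistanceClaim} to an arbitrary collusion. Suppose towards a contradiction that a collusion takes place; let $E_1,\dots,E_k$ be the entities involved, with $E_j$ controlling $a_j\geq 1$ DKG participants, so that $\sum_{j=1}^{k}a_j\geq t+1$. The case $k=1$ is immediate: a single entity controlling $t+1$ participants must have deposited at least $(t+1)\Delta > \tfrac{(t+1)\Delta}{2}$, contradicting the hypothesis. So assume $k\geq 2$.

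The key step is the observation that, since $E_j$'s investment is at least the total deposit of its participants, $a_j\Delta$, the hypothesis that no entity invested $\tfrac{(t+1)\Delta}{2}$ or more forces $a_j<\tfrac{t+1}{2}$ for every $j$. I would then replay the framing argument of Claim~\ref{CollusionResistanceClaim} for each colluding entity separately. Fix $E_j$ and condition on whether any other colluder files a framing complaint. If some other colluder frames, then all of $E_j$'s participants are slashed unless $E_j$ also files (and wins the race to be processed first), so $E_j$ strictly prefers to frame --- this is the ``frame/frame'' reasoning from the claim. If no other colluder frames, $E_j$'s payoff from framing is $(t-a_j)\Delta$, whereas its payoff from a successful collusion is at most the total value held by the system, $R$; since $a_j<\tfrac{t+1}{2}$ and $R<\tfrac{(t-1)\Delta}{2}$ we get $(t-a_j)\Delta > \bigl(t-\tfrac{t+1}{2}\bigr)\Delta = \tfrac{(t-1)\Delta}{2} > R$, so again $E_j$ strictly prefers to frame. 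Hence framing is the dominant action for every member of the purported collusion, so --- exactly as in Claim~\ref{CollusionResistanceClaim} --- no rational collection of entities would enter such a collusion (any attempt to form one unravels into framing before the harmful cooperation is ever carried out), giving the desired contradiction.

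The one point that deserves care, and the place the two-entity proof spends most of its effort, is justifying that $E_j$'s collusion payoff is bounded by $R$ even after accounting for bribes: in the claim the authors check the two ways $B$ could sweeten the deal for $A$. The clean generalization is to argue at the level of the whole collusion rather than pairwise --- the colluders jointly control a pot of size at most $R$ (the total value in the system), so whatever internal split and side payments they agree on, every individual colluder $E_j$ nets at most $R$, which is already strictly less than its framing payoff $(t-a_j)\Delta$. With that in hand the rest is a direct transcription of the reasoning behind Claim~\ref{CollusionResistanceClaim}.
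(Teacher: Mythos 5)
Your proof is correct and follows essentially the same route as the paper: the paper's own argument for the corollary is precisely that any entity inside a would-be multi-entity collusion can view the payoff matrix as ``herself against the rest, united,'' whence (having fewer than $\tfrac{t+1}{2}$ participants) framing dominates and the collusion never forms. Your write-up simply makes explicit what the paper leaves terse --- the inequality $(t-a_j)\Delta>\tfrac{(t-1)\Delta}{2}>R$ and the observation that side payments cannot help since the colluders' total pot is bounded by $R$ --- so no substantive difference in approach.
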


\begin{proof}
If no entity has invested $\frac{(t+1)\Delta}{2}$ then no two entities could be found to collude and obtain $t+1$ secret key shares. If entities that invested less than $\frac{(t+1)\Delta}{2}$ were to collude, forming a ``multiple-entity'' collusion, any entity within that collusion could view the payoff matrix as herself against the other entities, united. It would thus be profitable for any entity within that collusion to frame. We can thus conclude that no collusion would emerge. \qed
\end{proof}

\subsection{Robustness}
\label{Robustness}
If collusion resistance mitigates the risk of entities breaking the cryptosystem, robustness deals with the possibility of failing it. Economic robustness estimates the price of the different ways to fail the system---failing Escrow-DKG (by taking advantage of the complaint mechanism); failing the system during the application stage (by taking advantage of the framing mechanism); or putting the system to a halt (by convincing enough participants not to cooperate). 


Failing Escrow-DKG can be done by complaining unjustly or by submitting inconsistent data. We set the price of failing Escrow-DKG to be $\Delta/2$ (see Table~\ref{tab:complaints}), which is not high (but sufficient to prevent spam). This choice is sensible as not much can be gained from failing the DKG. 
Manipulating the secret $x$ is not possible, and while last actor attacks allow manipulating $g^x$, it was shown in~\cite{DKGRevisited} that the influence on $g^x$ is very limited. 
In our setting this influence comes with a price, as each attempt to re-sample $g^x$ costs $\Delta/2$.



During the application stage, framing can be used to fail the cryptosystem. The price of convincing $t+1$ participants to collude and then frame themselves must be higher than what they get in the first place which is $(t+1) \frac{\alpha R}{n}$ (additionally, they should be compensated for the burnt $\Delta$ due to the framing). Setting this price high amounts to tuning the ratio $t/n$ and the value $\alpha$ to be rather high. 


Finally, refusing to cooperate also amounts to failing the protocol. Lack of cooperation implies $n-t$ participants refuse to reveal their share. By instructing the escrow to burn all deposits in case of such lack of cooperation, we set the price of failing at $(n-t)\Delta$\footnote{Note that even though participants expect a reward of $\frac{\alpha R}{n}$ from participating 
in the cryptosystem, we make a worst case assumption where only one participant gets the whole $\alpha R$ (thus the $\frac{\alpha R}{n}$ factor is ignored in the pricing calculation).}. For this price to be high, one should set $n-t$ to be rather high.

It only makes sense to tune the parameters such that failing the protocol either way costs the same (excluding the DKG stage). This condition yields a relation between $\alpha$ and $t/n$. A simple computation gives $\alpha \sim \frac{2n(n-t)}{t^2}$. Intuitively, the smaller $\alpha$ is, the higher $t/n$ needs to be. To get $\alpha=0.25$, we need $t/n=0.9$ which sets the price of failing the application at $\frac{n}{10} \Delta$. Note the linear dependence in $n$.



\paragraph{Example I.}
\label{LotteryExplicitParameters}
Let's try some numbers in our lottery example. 
For $R=10^6$, $\Delta=10^4$ and $\alpha=0.25$, the prize is $750,000$ and $250,000$ is the reward paid to the participants generating the randomness for their services. For a $10\%$ ROI (which is lucrative for, say, a month's long investment), $n$ would have to be $250$ and $t=0.9 \cdot n=225$ (note that indeed $R$ is slightly smaller than $\frac{(t-1)\Delta}{2}$). Collusion is irrational as long as no entity has $113$ participants, which translates to a $1,130,000$ investment. Failing the protocol after the DKG would cost $25 \Delta=250,000$.


\paragraph{Example II.} 
Satoshi would deploy a smart contract\footnote{Ironically, Bitcoin is not a viable option to this end...} including the hash of her secret, and a trapdoor that allows (only) her to invoke slashing deposits. The trapdoor would expire within a year, but could be extended for another year, if Satoshi is still alive (and submits a transaction). We note that in this example, the analysis made above needs some modifications. First, it is known that the secret share holders are distinct real-world entities (the permissioned setting), and second, even in case of framing, the $R/n$ reward is guaranteed (Satoshi's signed transaction will not change). Framing can only yield an additional profit to the framers. This somewhat changes the payoff matrix in Table~\ref{tab:payoff_matrix}. In these circumstances, a less lucrative reward from framing is a better option. That is, framing would reward the framer with $\frac{t\Delta}{2}$ (instead of $t\Delta$). This would reduce the risk of $t+1$ participants deciding to split the costs of the slashing for the immediate reward $R/n$ and would enable reducing the size of the deposits (instead of $\Delta/(t+1) > R/n$, we would need $\Delta/2 > R/n$). 

Once Satoshi dies, she stops extending the trapdoor and the participants can safely cooperate and reveal the secret, as framing could no longer invoke slashing\footnote{To eliminate Satoshi's option of handing the trapdoor extension key to a friend before she dies, a hard limit should be added.}. To eliminate the risk of side contracts (mentioned in Sec.~\ref{collusion}), the deposits would have to be high enough, so that $\frac{t\Delta}{2}$ is too high an amount for the participants to put aside. 

\paragraph{Example III.}
In App.~\ref{ImproveBitcoin} we propose a multi-round leader election protocol that relies on an escrow service (which could be used to realize a sidechain to Ethereum) that mimics and improves on Bitcoin's incentive structure.



\section{Conclusion and Future Work}
\label{conclusion}
In this work we defined an economic model for rational threshold cryptosystems, which captures both the economic value produced by such systems, as well as the rationality of those taking part in the system, acting to maximize their own profits. We solve the problem of \emph{collusion pressure} inherent to rational cryptosystems by utilizing an escrow service that takes deposits from participants and uses them to discourage participants from colluding. We show how such an escrow can be realized over Ethereum. 

Our framework illustrates how threshold cryptosystems might be applicable in decentralized-permissionless settings, whereas the classic model for threshold cryptography is only applicable in trusted environments. The Satoshi example showed that our framework is suitable also in the permissioned setting, where each participant represents a distinct entity.

Our main contribution is to attach concrete economic costs to breaking the secrecy and robustness of the cryptosystem. The fact that these costs are functions of the system's parameters ($t,n,R,\alpha$) achieves two goals. First, it allows system designers to tune these parameters to suit their needs, and second, perhaps more importantly, it enables users to assess the credibility of the system. As a motivating example, we consider a set of $n$ participants that ``Shamir share'' a secret. While the participants have the ability to cooperate and reveal the secret at any time, our framework economically assures that they do so only in specific times defined by the system designers.

Our framework raises many questions. We are especially interested in the following issues, which we leave for future work. First and foremost, in the permissionless setting, collusion resistance and robustness rely on the ability to support very large $n$ and $t$ and to have an open and unlimited enrollment procedure (similarly to Bitcoin hash rate). Due to the computational (group arithmetics) and communication complexity of Ped-DKG, it can only support a limited number of participants. Future work is required to scale the number of participants---e.g., by having several DKG groups run side by side, as proposed in~\cite{Dfinity}, or use the sparse matrix approach suggested in~\cite{large-scale-DKG}.

Another interesting question is how to increase the robustness of the cryptosystem, especially during the DKG. The low cost of failing Escrow-DKG is a weak spot in our framework which could be critical in some applications. A more permissive complaint mechanism in the spirit of Ped-DKG (where complaints do not necessarily fail the DKG procedure), enhanced with slashing unjustified complaints, might be a good direction.

In the context of robustness, even if enough participants have principally decided to cooperate during the application stage (e.g., generate a threshold signature), the exact communication scheme between them is not considered in this work. In rational secret sharing~\cite{RationalSecretSharingHalpern}, communication protocols are considered with the aim that all the participants reveal the secret together. Achieving this would enhance the robustness of the cryptosystem, and its fairness.

\bibliographystyle{abbrv}
 
\bibliography{bibliography}

\clearpage 
\newpage 
\section{Appendix}
\label{appendix}

\begin{subappendices}
\renewcommand{\thesection}{\Alph{section}}%

\section{Ped-DKG and Feldman's VSS}
\label{Ped-DKG-Appendix}
The full description of Ped-DKG is given below (Fig.~\ref{Ped-DKG_Protocol}). The correctness and secrecy properties of Ped-DKG rely on a generalization of similar properties in Feldman's VSS~\cite{FeldmanVSS} (in which a single participant samples a polynomial and distributes commitments and shares, and the other participants only perform the validations).

\afterpage{
 

\begin{mdframed} [font=\small, align=left]

\captionof{figure}{\textbf{Ped-DKG}}
\label{Ped-DKG_Protocol}
\begin{enumerate}
\item Each player $P_i$ samples uniformly at random (and independently) $t+1$ coefficients from $Z_q$. These coefficients represent a random polynomial of degree $t$ over $Z_q$, denoted $f_i$:
$$f_i(z)=a_{i,0} +a_{i,1}z+\dots+a_{i,t}z^t$$
$P_i$ broadcasts her \emph{public commitments} $X_{i,k} = g^{a_{i,k}}$ for $k = 0, \dots ,t$. $P_i$ computes the \emph{sub-shares} $x_{i,j} = f_i(j)$ for $j = 1,\dots,n$ and privately sends $x_{i,j}$ to player $P_j$.

\item Each player $P_j$ verifies the sub-shares she received from the other players by checking 
\begin{equation*} \label {EqCommitment}
g^{x_{i,j}}=\prod_{k=0}^t\big(X_{i,k}\big)^{j^k}
\end{equation*}
for $i = 1,\dots,n$.

If a check fails for some index $i$, $P_j$ broadcasts a complaint against $P_i$.

\item In the case of a complaint against $P_i$ by $P_j$, $P_i$ reveals the sub-share $x_{i,j}$ (via the public broadcast communication channel). If any of the revealed sub-shares fail the above equation, $P_i$ is excluded from the set QUAL (i.e., disqualified). If more than $t$ players complain against $P_i$, she is also disqualified (as $t+1$ sub-shares reveal $f_i$ by interpolation). From the assumption on the broadcast channel, the set QUAL is defined uniquely among the correct participants.

\item The global secret $x$ itself cannot be computed by any player, but is equal to $x = \sum_{i\in \text{QUAL}} a_{i,0} = f(0)$. The public key $g^x$ can be calculated by any participant as $\prod_{i\in \text{QUAL}} X_{i,0} = g^{f(0)}$. 

Additionally, the secret key share of player $P_j$ is $x_j=f(j)=\sum _{i\in QUAL} x_{i,j}$, which only $P_j$ can compute. The corresponding public key $g^{x_j}$ however, can be calculated from the public commitments published in (1) (using Lagrange interpolation).
\end{enumerate}
\end{mdframed}
}

\begin{lemma} [Feldman]
\label{FeldmanLemma} Feldman's VSS  satisfies the following properties:
\begin{enumerate}
\item If the dealer $P_i$ is not disqualified during the protocol then all correct players hold shares that interpolate to a unique polynomial $f_i$ of degree $t$. In particular, any $t+1$ of these shares suffice to efficiently reconstruct the secret $f_i(0)$. 

\item The protocol produces information (the public values $X_{i,k}$ for $k=0,\dots,t$ and private values $x_{i,j}$ for $j=1,\dots,n$) that can be used at reconstruction time to test for the correctness of each share; thus, reconstruction is possible, even in the presence of malicious players, from any subset of shares containing at least $t+1$ correct shares. 
\item The view of the adversary is independent of the secret $f_i(0)$, and therefore secrecy is unconditional. 
\end{enumerate}
\end{lemma}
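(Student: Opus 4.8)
The plan is to handle the three items separately, with the common engine behind (1) and (2) being that $G = \langle g\rangle$ has prime order $q$, so $a \mapsto g^a$ is a bijection $Z_q \to G$; this is what lets the verification equation pin down a share \emph{exactly}, not merely ``in the exponent''.

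For item (1), let $F_i(z)=\sum_{k=0}^t a_{i,k}z^k$ be the degree-$t$ polynomial whose coefficients the dealer $P_i$ committed to via the broadcast values $X_{i,k}=g^{a_{i,k}}$, so all correct players agree on the same $\{X_{i,k}\}_{k=0}^t$. First I would argue that every correct player $P_j$ finishes the protocol holding a value $x_{i,j}$ with $g^{x_{i,j}}=\prod_{k=0}^t X_{i,k}^{j^k}=g^{F_i(j)}$: if $P_j$'s check in step (2) passed this is immediate, and if it failed then $P_j$ complained, so the share revealed in step (3) satisfies this equation (otherwise $P_i$ would have been disqualified, contrary to hypothesis). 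By injectivity of $a\mapsto g^a$ this forces $x_{i,j}=F_i(j)$ in $Z_q$. Hence all correct shares lie on the single degree-$t$ polynomial $f_i:=F_i$, and since honest majority gives at least $t+1$ correct players, any $t+1$ correct shares recover $f_i$, and in particular $f_i(0)$, by Lagrange interpolation.

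For item (2), the very same equation, now read as a predicate on a claimed share, is a public test computable from the broadcast commitments alone: a correct share passes it by (1), and an incorrect one fails it by injectivity again. So at reconstruction time one collects all submitted shares, discards those failing the test, and the surviving set still contains all $\ge t+1$ correct shares; interpolating any $t+1$ survivors yields $f_i(0)$ regardless of what malicious players submit. For item (3) I would fix an \emph{honest} dealer (secrecy of $f_i(0)$ is vacuous otherwise) and an adversary corrupting a set $B$ with $|B|\le t$; its view is $\{X_{i,k}\}_{k=0}^t$ together with $\{x_{i,j}\}_{j\in B}$ (an honest dealer sends consistent shares, so honest players never complain, and any spurious complaint by a corrupt $P_j$ only re-publishes a share the adversary already had). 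I would then exhibit a simulator that, given only the single group element $X_{i,0}$, samples $|B|$ uniformly random field elements for $\{x_{i,j}\}_{j\in B}$ and computes commitments $\{X_{i,k}\}_{k\ge 1}$ consistent with $X_{i,0}$ and these shares by Vandermonde/Lagrange interpolation in the exponent; its output is identically distributed to the real view. Since the simulator's sole input is $X_{i,0}=g^{f_i(0)}$, the adversary's view is a function of $g^{f_i(0)}$ alone and, conditioned on it, carries no information about $f_i(0)$ — which is the exact sense of ``secrecy is unconditional'' in the discrete-log setting (cf.\ the DKG secrecy requirement in Sec.~\ref{Ped-DKG}, stated up to the leakage of $g^x$).

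The main obstacle is item (3): being careful about precisely which shares become public (so the count stays $\le t$) and showing the simulator's distribution matches — concretely, that given the $\le t$ simulated shares and the target $X_{i,0}$ the induced commitments $X_{i,k}$ have exactly the real-execution distribution. This reduces to the standard fact that the values of a uniformly random degree-$t$ polynomial at any $t$ points are uniform and jointly independent of its value at $0$. Items (1) and (2) are then essentially bookkeeping layered on top of the prime-order injectivity argument.
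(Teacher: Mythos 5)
The paper does not actually prove Lemma~\ref{FeldmanLemma}: it is quoted as imported background, attributed to Feldman~\cite{FeldmanVSS} (cf.~\cite{DKG,DKGRevisited}), and the text immediately moves on to the generalization to Ped-DKG. So there is no in-paper proof to compare yours against; judged on its own, your argument is the standard one and is essentially sound. For items (1) and (2), deriving $g^{x_{i,j}}=g^{F_i(j)}$ for every correct player (either the check passed, or the complaint forced the dealer to publish a share satisfying it on pain of disqualification) and then using injectivity of $a\mapsto g^a$ in the prime-order group is exactly the classical route, and the same equation read as a public predicate gives the share-filtering reconstruction of (2). For item (3), the simulator that takes $X_{i,0}=g^{f_i(0)}$, samples the corrupted parties' shares uniformly, and interpolates the remaining commitments in the exponent is the standard simulation; the padding detail when fewer than $t$ parties are corrupted, which you flag yourself, is the only bookkeeping left.

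Two caveats you should make explicit. First, your item (1) silently assumes every broadcast commitment lies in $\langle g\rangle$, so that ``the polynomial the dealer committed to'' is well defined; the paper itself notes (footnote to Fig.~\ref{Escrow-DKG-fig}) that this group-membership check was omitted in~\cite{DKG,DKGRevisited} yet seems necessary for the proofs, so either state it as a hypothesis or build the membership test into the protocol you analyze. Second, item (3) as literally stated (``the view of the adversary is independent of the secret $f_i(0)$, therefore secrecy is unconditional'') cannot hold for Feldman's VSS: the view contains $X_{i,0}=g^{f_i(0)}$, which information-theoretically determines $f_i(0)$; that wording is the secrecy property of Pedersen's information-theoretically hiding VSS. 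What is true for Feldman---and what your simulator actually establishes---is the weaker property matching the secrecy requirement of Sec.~\ref{Ped-DKG}: the adversary's view is simulatable from $g^{f_i(0)}$ alone, i.e., nothing beyond $g^{f_i(0)}$ is leaked. Your reading is the correct one, but say explicitly that you are proving this corrected statement rather than the literal claim.
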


While the generalization to the Ped-DKG protocol seems straightforward, the attack described in \cite{DKG} shows there are delicate issues to address. 

\section{Threshold BLS-3 Signatures}
\label{ThresholdBLS}
Throughout the paper we describe a randomness beacon based on threshold BLS-3 signatures. For completeness, we specify the ingredients of the proposed signature scheme.

\subsubsection*{BLS Signatures.} 
In \cite{BLSPaper} the authors present a signature scheme based on the Weil pairing, which is both  simple and efficient. It results in signatures which are \emph{unique}, namely, for a pair of message and public key $(m,PK)$ there exists a unique signature $M$ which passes validation. The scheme relies on the \emph{Diffie-Hellman problem} and its variants. Let $G=\langle g \rangle$ be a cyclic group of order $q$. We consider the following problems: 

\emph{Decision Diffie-Hellman (DDH).} Let $a,b,c\in Z_q$. Given a tuple $(g,g^a,g^b,g^c)$, decide whether $g^c=g^{ab}$. 

\emph{Computational Diffie-Hellman (Co-DHP).} Let $a,b\in Z_q$. Given a tuple $(g,g^a,g^b)$, compute $g^{ab}$. 

A \emph{DDH} (resp. \emph{Co-DHP}) group is a group in which the DDH (resp. Co-DHP) is \emph{hard}. A \emph{Gap Diffie-Hellman} (GDH) group is one where the DDH is easy while the Co-DHP is hard. 

BLS utilizes GDH groups to design a signature scheme in which signing and verifying signatures is efficient: let $G=\langle g \rangle$ be a GDH group, and $x\in Z_q$ be the secret key, assumed to be known to the signer alone. Define $X=g^x$ to be the public key known to all. To sign a message $m\in G$, the signer simply raises $m$ to the power $x$, such that the signed message is $M=m^x$. Since DDH is easy in $G$, anyone can verify that $M$ is really $m^x$, by solving DDH $(g,X,m,M)$ (note that $m=g^y$ for some unknown $y\in Z_q$). On the other hand, forging a signature is infeasible since it amounts to solving the Co-DHP in $G$.

\subsubsection*{Pairings.}
One way to construct GDH groups is by using \emph{pairings}---efficiently computable maps admitting two useful properties:
\begin{definition} [Pairing]
Let $G_1,G_2$ and $G_T$ be groups. A pairing $e:G_1\times G_2\rightarrow G_T$ is a bilinear non-degenerate map.
\end{definition}

Pairings give rise to GDH groups: if $e$ is symmetric, i.e., $G_1=G_2$, and $G_1$ is a Co-DHP group, then $G_1$ is immediately a GDH group~\cite{BLSPaper}. The BLS construction uses a concrete symmetric pairing of a Co-DHP group over an \emph{elliptic curve}. This choice results in sufficient security as well as in relatively short representation. A later work on pairings~\cite{AssymetricPairings} established the fact that \emph{asymmetric pairings} can preform better in both these aspects.

\subsubsection*{Asymmetric pairings.} In case $G_1\ne G_2$ the pairing is said to be either of \emph{type-2} (if there is an efficiently computable isomorphism $\Psi: G_2\rightarrow G_1$) or of \emph{type-3} (if there is no such $\Psi$). The authors of  \cite{AssymetricPairings} define BLS-3---a variant of the BLS scheme which utilizes type-3 pairings. This allows using groups with much shorter representation, while maintaining sufficient security\footnote{In particular, a precompiled contract for computing a specific asymmetric pairing was included in the Ethereum Virtual Machine, as we elaborate in App.~\ref{Escrow-DKG-Ethereum}.}. 
The essential difference between BLS and BLS-3 is the underlying hardness assumption, which is a variant of Co-DHP, fit to the asymmetric setting:

\emph{Co-DHP\textup{*}.}  Let $G_1=\langle g_1 \rangle ,G_2=\langle g_2 \rangle $ be two cyclic groups of the same prime order $q$. Given $h,g_1,g_1^x\in G_1$ and $g_2,g_2^x\in G_2$ (where $x\in Z_q$), compute $h^x\in G_1$. 

Typically in the type-3 setting, $G_1$ computations are easier to execute~\cite{AssymetricPairings}. For this reason, messages in the BLS-3 scheme come from $G_1$---so that signatures (exponentiation of $m\in G_1$) could be computed easily. Moreover, it makes sense to have the public keys come from $G_2$, since a public key only needs to be computed once (during key generation). In the DKG however, many computations should be made in $G_2$. 

\subsubsection*{Threshold BLS signatures.} In~\cite{BoldyrevaThreshold}, Boldyreva proposes a general method for adapting various signatures schemes to the threshold setting. She specifically illustrates how to use the Ped-DKG protocol to turn BLS to a threshold signature scheme.

\section{Escrow-DKG over Ethereum}
\label{Escrow-DKG-Ethereum}
\subsection{Ethereum background}
The Ethereum blockchain, sometimes referred to as a ``world computer'', is a public blockchain in the Proof-of-Work (PoW) paradigm that maintains a global state under consensus. Digital transactions, issued by users, are collected in blocks which in turn are included to the blockchain periodically (through mining). Smart contracts are arbitrary programs, deployed to the Ethereum blockchain, and compiled to the Ethereum Virtual Machine (EVM). Transactions invoke smart contract execution and consume ``gas'' (which is a metric system to asses the amount of ``work'' operations require), in order to change the Ethereum global state.   

We highlight two main Ethereum capabilities: Ethereum as an escrow service that can follow a complex set of rules, and Ethereum as a public broadcast channel.     

\subsubsection*{Ethereum as escrow.}
A smart contract over Ethereum can easily act as an escrow service: its state allows storing a set of users along with their balances; and it can incorporate rules that determine when and how these balances are released back to the users. 
An escrow service over Ethereum has the advantage of that its corresponding escrow agent is completely abstract---there is no trusted entity that controls the escrow and can be subverted. As long as the Ethereum blockchain is viewed as a reliable system, with honest majority of hash power, the escrow cannot diverge from its prescribed set of rules.

Evidently, a sophisticated escrow service can incentivize rational users to act according to some predetermined notion of ``correct'' or desired behavior.

\subsubsection*{Communication over Ethereum.} 
The Ethereum blockchain may serve as a tamper-resistant, publicly-available and censorship-resistant ``blackboard''. 
Tamper-resistance means that once a record has been included in the blockchain (sufficiently long ago), altering it is practically impossible. Public-availability guarantees anyone has permission to read data from the blockchain. Finally, censorship-resistance means that for a reasonable fee, anyone can write arbitrary data (of reasonable size) to the Ethereum blackboard within a reasonable delay (that depends on the fee paid). These three qualities make Ethereum a great broadcast communication channel. Specifically, we assume it takes at most $\delta$ blocks to get a transaction included in a block. Additionally, by using public-key encryption (specifically, ElGamal encryption~\cite{AppliedCryptoBonehShoup}), Ethereum can be utilized as a private communication channel.

\subsection{Eth-DKG}
We now turn to specify \emph{Eth-DKG}---Escrow-DKG adapted to run efficiently over the Ethereum blockchain. The modifications essentially concern two aspects: employing specific elliptic curve groups, and minimizing on-chain resource consumption by incorporating interactive proofs for complaint arbitration. 

\subsubsection*{Precompiled contracts.} Computing general pairings and group arithmetics is costly, and executing many such computations over Ethereum would be infeasible. Fortunately, a precompiled contract that allows computing a specific asymmetric pairing $e:G_1\times G_2\rightarrow G_T$ (see~\cite{ethereum}, App. E) was incorporated into the EVM (in the Byzantium hard fork), along with several other precompiled contracts allowing to perform inexpensive $G_1$ group arithmetics.

There is still another problem to circumvent in the context of group arithmetics in Eth-DKG. Since the public key resides in $G_2$ (see App.~\ref{ThresholdBLS}), operations in Eth-DKG should be performed over $G_2$. However, the precompiled contracts only allow for efficient computations in $G_1$. To tackle this limitation, participants commit to their polynomial over \emph{both}, $G_1$ and $G_2$, by publishing $X^u_{i,k}=g_u^{a_{i,k}}$ (for $u=1,2$). We then rely on the following observation: we say that two commitments $X^1_{i,k_0}, X^2_{i,k_0}$ are \emph{$(G_1,G_2)$-consistent} if their discrete log values (with respect to $g_1$ and $g_2$) are the same. It can be  shown that, as long as $G_1$ and $G_2$ are cyclic groups, $X^1_{i,k_0}, X^2_{i,k_0}$ are $(G_1,G_2)$-consistent if and only if $e(g_1,X^2_{i,k_0})=e(X^1_{i,k_0},g_2)$ (see~\cite{PairingInversion}).

To verify that the sub-shares match the $G_2$ commitments, the smart contract would first verify that the $G_1$ commitments are consistent with the sub-shares (using the precompiled contracts for $G_1$ group arithmetics) and then that all pairs $X^1_{i,k}, X^2_{i,k}$ are $(G_1,G_2)$-consistent (using the precompiled contract for pairing computation).

\subsubsection*{Optimistic off-chain approach.}
To reduce on-chain communication to minimum, after enrollment participants are instructed to send \emph{all} public and private data via off-chain channels. In the optimistic scenario, enrollment is the only on-chain transaction. 
$P_i$'s enrollment transaction would contain additional information---$Dig_i:=Digest\big(\{X^1_{i,k}\},\{X^2_{i,k}\}\big)$, in order to make sure that she sends the same commitments $(\{X^1_{i,k}\},\{X^2_{i,k}\}\big)$ to all participants.

In order to retain the ability to arbitrate complaints over Ethereum, all off-chain communication is delivered with authentication so that the smart contract can verify the identity of the sender. This would allow using off-chain messages as evidence when complaints are filed to the contract.

\subsubsection*{Undelivered off-chain communication.}
In case $P_j$ did not receive a transaction from $P_i$ off-chain, she can submit a \emph{missing data transaction}. If the data is public in nature, $P_i$ is asked to publish it over Ethereum. If the data is private, $P_i$ encrypts it using $P_j$'s public encryption key $\gamma^{\xi_j}$ and then submits it to the smart contract (this is the reason for including $\gamma^{\xi_j}$ in the enrollment transaction). If the data is too large to be written on-chain, we propose to use the EVM memory which is much cheaper in terms of gas (but is not persistent). This could prove that indeed, some data, in the right size (but not necessarily the required data), was delivered. Then, if still needed, a regular complaint (that writes only a limited amount of data on-chain, as exemplified in App.~\ref{InetractiveComplaints} with the interactive sub-share complaint) can be filed.

Fig.~\ref{EthDKGFig} gives a full description of the Eth-DKG protocol. To keep the description concise, we assume the reader is familiar with Escrow-DKG (see Fig.~\ref{Escrow-DKG-fig}).

\afterpage{


\begin{mdframed} [font=\small, align=left]



   \captionof{figure}{\textbf{Eth-DKG}}

 

\label{EthDKGFig}
\begin{enumerate}
\item[0] Deployment. The protocol formally begins when the smart contract is deployed.
\item[1] Enrollment. The relevant Ethereum accounts enroll to the protocol by submitting \emph{enrollment} transactions \textbf{on-chain} 
$$tx_1'(i)=\Big[\Delta,\gamma^{\xi_i}, Dig_i \Big]$$

This is the only transaction (excluding complaints) in Eth-DKG which is sent on-chain.

\item[2] Pubic commitments. Every participant $P_i$ sends (off-chain) to all other participants $P_j$ her \emph{commitments} transaction
$$tx_2'(i)=[\{X^1_{i,k}\}_{k=0}^t,\{X^2_{i,k}\}_{k=0}^t]$$
If some $P_j$ received $tx_2'(i)$ that does not match $Dig_i$ she can file a complaint $cm_1'(j,i)$. If some pair of commitments $(X^1_{i,k_0}, X^2_{i,k_0})$ are not $(G_1,G_2)$-consistent, $P_j$ can file a complaint $cm_2'(j,i,k_0)$\footnote{Note that these complaints would require submitting data on-chain and might invoke interactive arbitration, as exemplified in App.~\ref{InetractiveComplaints}.}.

\item[3] Sub-shares. Every participant $P_i$ sends (over a private communication channel) to all other participants $P_j$ her corresponding \emph{sub-share} transaction
$$tx_3'(i,j)=[x_{i,j}]$$




\item[4] Sub-shares verification. Every participant $P_j$ locally verifies the sub-shares sent to her are consistent with the $G_1$ commitments she received, namely that for all $i$, 
$$g_1^{x_{i,j}}=\prod_{k=0}^t (X^1_{i,k})^{j^k}$$
If the equality for some $i'$ does not hold, she can file a complaint $cm_3'(j,i')$.

If no complaints were filed during the course of the protocol, Eth-DKG is said to have \emph{completed successfully} (the deposits are kept for the application stage). Note that the public key of the scheme is $g_2^x$, and is computed by every participant locally. If needed, anyone may publish the public key on-chain. In case of a complaint during the course of the protocol, the smart contract performs the arbitration and the protocol fails (it may be relaunched).

\end{enumerate}

\textbf{Undelivered communication.} In case $P_j$ did not receive $tx'_l(i)$ (for $l\in \{2,3\})$ in time, she can request $P_i$ to publish it on-chain by submitting a special alert to the smart contract. This alert initiates a special phase of on-chain communication. During this phase $P_i$ is required to submit the missing transaction to the smart contract---if the missing data is public ($l=2$) she submits $tx_2'(i)$ as is; otherwise, if it is private ($l=3$), she submits $tx_3''(i,j)=[ENC(x_{i,j},\gamma^{\xi_j})]$. If $P_i$ still fails to submit her on-chain transaction in time, any participant $P_m$ may file a complaint $cm_4'(m,i,l)$ against her. 
\end{mdframed}
}

\subsection{Interactive complaints over Ethereum}
\label{InetractiveComplaints}

The arbitration of certain complaints requires complex computation that does not scale well with $t$ and $n$. In this section we demonstrate a general approach to reduce on-chain 
arbitration costs via interactive protocols. Such protocols allow to shift the ``heavy-lifting'' from Ethereum to the users that are in dispute.

In high level, a communication-computation trade-off is being exploited---instead of performing an $\Omega(n)$ computation on-chain, the parties engage in an $O(\log(n))$-phase long interactive protocol that isolates ``the core of the dispute'' and consumes $O(1)$ on-chain resources per phase. We give a specific interactive protocol for $cm'_3(j,i)$, the sub-share complaint, and compare between naive arbitration and the interactive arbitration in terms of gas consumption. We emphasize that all $\Omega(n)$ complaints can be arbitrated in the spirit of this approach.


Consider the case where participant $j$ received from participant $i$ the sub-share $x_{i,j}$ and the commitments $X_{i,k}$ for $k=0, \dots ,t$ (all signed by $i$). Participant $j$ then performs the following verification:

\begin{equation} \label{equation:sub-share_check}
    g_1^{x_{i,j}}=\prod_{k=0}^t\big(X_{i,k}\big)^{j^k}
\end{equation}

\noindent In the case of failure, $j$ submits a complaint transaction against $i$: $cm'_3(j,i)$.
This initiates a dispute stage between $j$ (the “prover”) and $i$ (the “challenger”), where  the contract takes the role of the arbitrator.

Settling the dispute naively would require $j$ to submit to the contract both, $i$’s commitments and the sub-share $x_{i,j}$. Then, the contract would verify equation~\ref{equation:sub-share_check} and determine whether the complaint was just. However, this approach consumes storage and computation linearly in $t$ and would be unsustainable over Ethereum. Moreover, we put special attention on \emph{elliptic curve} operations~\cite{ellipticCurveGuide} which are extremely expensive---while a sustainable arbitration procedure might consume $O(log(t))$ on-chain resources, it should however reduce elliptic curve computations to minimum. We propose the interactive approach in Fig.~\ref{Inter-Sub-share_complaint} which satisfies the above requirements (in particular, it requires only $O(1)$ elliptic curve computations). 

\afterpage{

\begin{mdframed} [
	font				= \small, 
	align				= left,
	innertopmargin		= 4pt,
	innerbottommargin	= 4pt,
	innerrightmargin	= 4pt,
	innerleftmargin		= 4pt,
	skipabove			= 10pt,
	skipbelow			= 10pt,
	nobreak				= true
]

\captionof{figure}{\textbf{Interactive sub-share complaint}}
\label{Inter-Sub-share_complaint}
Let $j$ be the prover and $i$ be the challenger.

    \begin{enumerate}
    
        \item As a preparatory step each of the parties computes locally:
        
        $$ \zeta (m)=\prod_{k=0}^m\big(X_{i,k}\big)^{j^k} \text{ for } m = 0,\dots,t$$
        
        We denote by $\zeta^i(*)$ and $\zeta^j(*)$ the values $i$ and $j$ compute respectively. 
        
        \item The dispute phase progresses in rounds, where in each round, $i$ and $j$ interact by submitting transactions to the contract in turns\footnote{If one of them, in her turn, fails to submit a valid transaction within a predetermined time (block height), she forfeits the dispute and loses her deposit.}. The contract's state is initialized: $l \gets -1$, $ h \gets t+1$. 
        
        \textbf{While} $h-l > 1$ execute the following round:
        
            \begin{enumerate}\item[]
            \begin{enumerate}[label=(\alph*)]

                \item Each of the parties locally compute $m = l + \big\lceil \frac{h-l}{2} \big\rceil$.
    
               \item $i$ submits to the contract $\zeta^i(m)$. 
                The contract updates: $last \gets \zeta^i(m)$.
                
                \item $j$ submits to the contract the bit $\beta$ where:
                $$ 
                    \beta =	
                        \begin{cases}
                    		  `agree`  &\mbox{if } \zeta^i(m) = \zeta^j(m) \\
                     		  `disagree`  &\mbox{otherwise}
                         \end{cases}
                $$
                If $\beta=`agree`$, the contract updates: 
                $highest_{agree} \gets last \text{, } l \gets m$.
                
                Otherwise, the contract updates:  
                $lowest_{disagree} \gets last \text{, } h \gets m$.
                
                
            \end{enumerate}\end{enumerate}
        \item Once the stopping condition is met, $h=l+1$, and there are three possible cases:
            \begin{enumerate}
                \item If $l > -1$ and $h < t+1$, 
                 $j$ submits to the contract $X_{i,l+1}$, signed by $i$. The contract verifies:
                    $$highest_{agree} \cdot (X_{i,l+1})^{j^{l+1}} = lowest_{disagree}$$

                \item If $l=-1$, $j$ submits to the contract $X_{i,0}$, signed by $i$. The contract verifies:
                    $$lowest_{disagree} = X_{i,0}$$

                \item If $h=t+1$, $j$ submits to the contract $x_{i,j}$, signed by $i$. The contract verifies:
                    $$highest_{agree} = g_1^{x_{i,j}}$$
               
               \item[] For each of the above cases, if the equality holds the complaint is ruled unjust, otherwise it is ruled just.
            \end{enumerate}
        
    \end{enumerate}
\end{mdframed}
}




\begin{claim}
Given that equation~\ref{equation:sub-share_check} does not hold and that the prover $j$ follows her instructions in Fig.~\ref{Inter-Sub-share_complaint}, then her complaint is ruled just. Conversely, if the equation does hold and the challenger $i$ follows her instructions in Fig.~\ref{Inter-Sub-share_complaint}, then $j$'s complaint is ruled unjust.
\end{claim}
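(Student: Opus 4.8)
The plan is to carry a single invariant through the binary search that pins down the meaning of the contract's state variables $highest_{agree}$ and $lowest_{disagree}$, and then read off the verdict from each of the three terminal cases. First I would fix notation: let $\zeta(-1)$ be the identity of $G_1$ and $\zeta(m)=\prod_{k=0}^{m}(X_{i,k})^{j^{k}}$ for $0\le m\le t$, so that $\zeta(0)=X_{i,0}$ and $\zeta(m)=\zeta(m-1)\cdot(X_{i,m})^{j^{m}}$, and observe that equation~\ref{equation:sub-share_check} is exactly the assertion $\zeta(t)=g_1^{x_{i,j}}$. I would also record one structural fact used repeatedly: the commitments $X_{i,k}$ and the sub-share $x_{i,j}$ are signed by $i$, so when step~3 of Fig.~\ref{Inter-Sub-share_complaint} requires $j$ to submit $X_{i,l+1}$, $X_{i,0}$ or $x_{i,j}$, only the genuine value carries a valid signature --- submitting anything else is an invalid transaction and forfeits the dispute, settling it in the opponent's favour. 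Hence in the terminal check the contract always operates on the true $X_{i,l+1}$, $X_{i,0}$ or $x_{i,j}$, or the claimed direction already follows from the forfeit rule; the same rule disposes of the cases where $i$ times out (first direction) or $j$ times out (second direction).

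For the first direction (equation~\ref{equation:sub-share_check} fails, $j$ honest), an honest $j$ computes $\zeta^{j}(m)=\zeta(m)$, so her bit at a midpoint $m$ is \texttt{agree} precisely when the value $\zeta^{i}(m)$ submitted by $i$ equals $\zeta(m)$. I would then prove, by induction on the \textbf{while}-rounds, the invariant: (i) if $l\ge 0$ then $highest_{agree}=\zeta(l)$; (ii) if $h\le t$ then $lowest_{disagree}\ne\zeta(h)$. It holds vacuously at initialization ($l=-1$, $h=t+1$), and the update preserves it, because on \texttt{agree} we have $\zeta^{i}(m)=\zeta(m)$ and the contract sets $l\gets m$, $highest_{agree}\gets\zeta(m)$, while on \texttt{disagree} we have $\zeta^{i}(m)\ne\zeta(m)$ and the contract sets $h\gets m$, $lowest_{disagree}\gets\zeta^{i}(m)\ne\zeta(m)$; since $l<m<h$ in every round the search terminates with $h=l+1$. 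In terminal case~1 the contract's equation is $\zeta(l)\cdot(X_{i,l+1})^{j^{l+1}}=lowest_{disagree}$, whose left side equals $\zeta(l+1)=\zeta(h)$ (recursion, genuine $X_{i,l+1}$) while the right side differs from $\zeta(h)$ by (ii); in case~2 the equation $lowest_{disagree}=X_{i,0}=\zeta(0)$ fails by (ii); in case~3 the equation $highest_{agree}=g_1^{x_{i,j}}$ becomes $\zeta(t)=g_1^{x_{i,j}}$, false by hypothesis. In every case the check fails, so the complaint is ruled just.

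For the converse (equation~\ref{equation:sub-share_check} holds, $i$ honest) the argument mirrors the above and is simpler, because every value $i$ submits is the true $\zeta(m)$; hence, whatever bits $j$ sends, each assignment to $highest_{agree}$ or $lowest_{disagree}$ records the true $\zeta$ at the relevant index, and the invariant becomes ``$l\ge 0\Rightarrow highest_{agree}=\zeta(l)$'' together with ``$h\le t\Rightarrow lowest_{disagree}=\zeta(h)$''. In terminal case~1 the check $\zeta(l)\cdot(X_{i,l+1})^{j^{l+1}}=\zeta(l+1)$ is the recursion; in case~2, $lowest_{disagree}=X_{i,0}$ is the base case; in case~3, $highest_{agree}=g_1^{x_{i,j}}$ is the hypothesis $\zeta(t)=g_1^{x_{i,j}}$. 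So the check holds and the complaint is ruled unjust. I expect the only delicate points to be the bookkeeping of the three terminal cases with the off-by-one handling of the sentinels $l=-1$ and $h=t+1$, and the short argument that the signature requirement forces the genuine commitment or sub-share into the final equation; the binary-search invariant itself is routine once stated.
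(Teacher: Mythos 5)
Your proof is correct and follows essentially the same route as the paper's: both reduce the claim to the three terminal cases of the binary search and rely on the signed, genuine $X_{i,l+1}$, $X_{i,0}$ or $x_{i,j}$ entering the final check, with agreement/disagreement tracked against the true partial products $\zeta(m)$. The only difference is presentational --- you package the bookkeeping as an explicit loop invariant on $highest_{agree}$ and $lowest_{disagree}$, whereas the paper argues by enumerating the dishonest party's possible strategies; the substance is the same.
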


Before heading to the proof, we note that the repetitive procedure in step $2$ employs a binary search between the parties to find where they first disagree in the right-hand side computation of equation~\ref{equation:sub-share_check}. By the end of step two, the parties' dispute is reduced to one of the following cases:
\begin{enumerate}
\item Case 3.(a) illustrates the scenario in which the parties agree on some $\zeta(l)$ but disagree on $\zeta(l+1)$ for $0 \leq l \leq t-1$. Then, the contract calculates $\zeta(l+1)$ from $\zeta(l)$ and $X_{i,l+1}$ (where the latter is supplied to the contract by $j$ and must be signed by $i$). If the value $\zeta(l+1)$ that the contract calculated conflicts with $\zeta^i(l+1)$ then the complaint is ruled just. Otherwise, it is ruled unjust.
    
\item Case 3.(b) illustrates the scenario in which the parties disagree on $\zeta(0)$. Then, if $j$ supplies the contract with $X_{i,0}$, signed by $i$, which conflicts with $\zeta^i(0)$ then the complaint is ruled just. Otherwise, it is ruled unjust.

\item Case 3.(c) illustrates the scenario in which the parties agree on $\zeta(t)$. Then, if $j$ supplies the contract with $x_{i,j}$, signed by $i$, where $g_1^{x_{i,j}}$ conflicts with $\zeta^i(t)$ then the complaint is ruled just. Otherwise, it is ruled unjust. 
\end{enumerate}

\begin{proof}
First assume equation~\ref{equation:sub-share_check} does hold but $j$ complains. We show that the complaint is ruled unjust. During step $2$, $j$ must admit one of the following strategies:
\begin{enumerate}
\item If $j$ at some point agrees and at another point disagrees, then we are in case 3.(a), and from the assumption that $i$ follows her instructions correctly, $\zeta^i(l)=\prod_{k=0}^l\big(X_{i,k}\big)^{j^k}$, $j$ must supply the contract with $X_{i,l+1}$ (from the assumption that equation~\ref{equation:sub-share_check} holds). Indeed, since $\zeta^i(l+1)=\prod_{k=0}^{l+1}\big(X_{i,k}\big)^{j^k}$ the contract would compute $\zeta(l+1)=\zeta^i(l+1)$ and rule that the complaint is unjust.

\item If $j$ always disagrees, then we are in case 3.(b) and from the assumption that $i$ follows her instructions correctly, $\zeta^i(0)=X_{i,0}$. $j$ must supply the contract with $X_{i,0}$ (from the assumption that equation~\ref{equation:sub-share_check} holds). Indeed, since $\zeta^i(0)$ does not conflict with $X_{i,0}$ the contract rules that the complaint is unjust.

\item If $j$ always agrees, then we are in case 3.(c) and from the assumption that $i$ follows her instructions correctly and that equation~\ref{equation:sub-share_check} holds, $\zeta^i(t)=g_1^{x_{i,j}}$. $j$ must supply the contract with $x_{i,j}$ and the contract computes $g_1^{x_{i,j}}$. Indeed, since $\zeta^i(t)$ does not conflict with $g_1^{x_{i,j}}$ the contract rules the complaint unjust.
\end{enumerate}

In the other direction, we show that if equation~\ref{equation:sub-share_check} does not hold and $j$ complains, then the complaint is ruled just. During step $2$, $i$ must admit one of the following strategies:
\begin{enumerate}
\item If $i$ always submits $\zeta(m)$ then $j$ will always agree and step $2$ ends in case 3.(c). From the assumption that the equation~\ref{equation:sub-share_check} does not hold, once $j$ submits $x_{i,j}$ and the contract computes $g_1^{x_{i,j}}$, it finds a mismatch with $\zeta(t)$ and rules that the complaint is just.    

\item If $i$ at some point submits $\zeta^i(m) \ne \zeta(m)$ then step $2$ ends in case 3.(a) or 3.(b). If $i$ always submits $\zeta^i(m) \ne \zeta(m)$ then eventually, she submits $\zeta^i(0) \ne X_{i,0}$. When $j$ submits $X_{i,0}$ the contract rules the complaint as just. Otherwise, If $i$ submits some $\zeta^i(m) = \zeta(m)$ then the interactive protocol would find some $0\leq m^* \leq t-1$ such that $\zeta^i(m^*) = \zeta(m^*)$ and $\zeta^i(m^*+1) \ne \zeta(m^*+1)$. When $j$ submits $X_{i,m^*+1}$ the contract rules the complaint as just.
\end{enumerate} 
\qed
\end{proof}


We turn to analyze the computational complexity of the interactive dispute protocol. As mentioned before, the repetitive procedure in step $2$ performs a binary search over $t+1$ values thus the dispute takes $O(\log(t))$ rounds. At every round each of the parties submits to the contract a constant amount of data and the contract performs a constant amount of computations. Together with the final step, where constant computations are performed, we conclude that the overall complexity of the dispute phase is $O(\log(t))$ in time and space. 
We further emphasize that elliptic curve computations occur only in step $3$, hence our protocol requires only $O(1)$ such computations in total.

In Fig.~\ref{fig:subshare_complaint_gas} we show the actual gas costs of the two approaches to arbitrate the sub-share complaint---the interactive approach and the naive on-chain approach. The results are taken from a prototype implementation of the two arbitration procedures.  

We note that currently\footnote{As for September 2018.} Ethereum's block gas limit is approximately $8$M gas. Hence the naive approach would be practically infeasible to run in a smart contract for any threshold values that are greater than $t=150$. In contrast, the interactive approach can handle very high threshold values, e.g., for $t=100,000$ the gas cost for each of the parties is much lower than $1$M gas.

\begin{figure}[t!]
\begin{tikzpicture}
  \begin{axis}
  [ 
  	xmode		= log,
    log ticks with fixed point,
    xlabel		= {threshold $t$},
    ylabel		= {gas cost},
    xmin		= 0, 
    xmax		= 100000,
    ymin		= 0, 
    ymax		= 1000000,
    xtick		= {1,10,100,1000,10000,100000},
    ytick		= {0,200000,400000,600000,800000,1000000},
    height 		= 0.42 \textwidth,
    width 		= 0.95  \textwidth, 
    grid        = major, 
    legend style= {at={(0.99,0.38)},font=\fontsize{7}{5}\selectfont},
    x tick label style={
    	font=\footnotesize,
        /pgf/number format/.cd,
            fixed,
            precision=2,
        /tikz/.cd
    },
    y tick label style={font=\footnotesize},
    label style={font=\footnotesize}
  ] 

       \addplot[color=red, mark=triangle*, only marks] coordinates {(5,  328532)(20,  407624)(50,  447170)(100,  486716)
    (500,  565808)(1000,  605354)(10000,  763538)
    (100000,  842630)
    };
    \addlegendentry{prover (interactive)};
    \addplot[color=blue, only marks] coordinates {
    (5,  139575)(20,  212625)(50,  249150)(100,  285675)
    (500,  358725)(1000,  395250)(10000,  541350)
    (100000,  614400)
    };
    \addlegendentry{challenger (interactive)}; 
    \addplot[color=black, mark=square*, only marks] coordinates {
    (1,  237232)(5,  451782)(10,  719973)(15,  988170)
    };
    \addlegendentry{prover (naive)}; 
  \end{axis}
\end{tikzpicture}
\caption{Sub-share complaint arbitration gas costs. Two approaches are presented: a single round (naive) arbitration; an interactive multi-round arbitration.}
\label{fig:subshare_complaint_gas}
\end{figure}
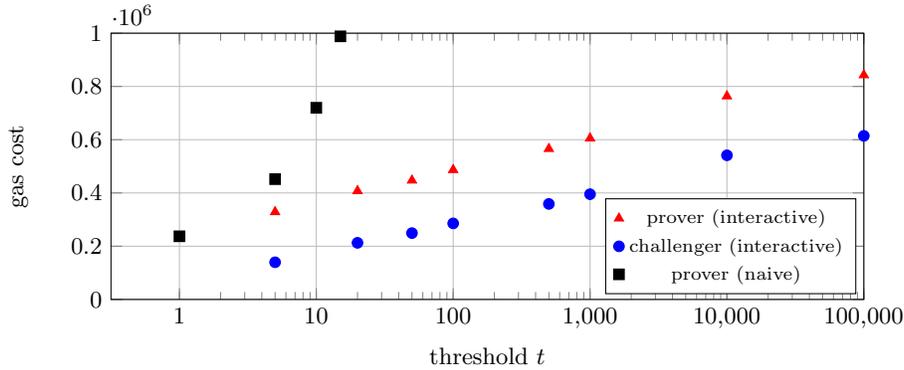




\section{Incentive driven multi-round leader election}
\label{ImproveBitcoin}
In Bitcoin, miners invest money in mining hardware, electricity and maintenance in order to participate in an ongoing PoW lottery. When elected, miners have the right to mint new bitcoins. While the mining game has proved to attract many miners, it suffers two main problems. First, it was shown that miners, following selfish mining and block withholding strategies~\cite{SelfishMining,SelfishMining-AvivZohar}, could bias the lottery to their benefit, if they control enough of the hash power. This incentivizes miners to collude and form large mining pools. In a situation where $51\%$ of the mining power is in the hands of colluding miners, they can ensure $100\%$ of the rewards to themselves. The second problem is that mining is not a "fair" game. Ideally, a miner's probability of winning the lottery would be proportional to the money they invested. With variations in mining hardware, electricity costs and software advancements like AsicBoost~\cite{AsicBoost}, the mining landscape in Bitcoin is far from ideal. 

We propose a multi-round leader election protocol, in which the leader in round $r$ is elected according to a secret value $S^r$, which is kept hidden until round $r$ and is revealed only at the beginning of the round. Similarly to our lottery example, $S^r$ is the unique threshold signature of $S^{r-1}$. The leader is elected among a set of candidates, which are also the ones producing the signature. At the start of the protocol they all engage in Escrow-DKG to generate the appropriate keys (recall that in Escrow-DKG participants are required to make a deposit to the escrow). 




Similarly to Bitcoin, by publishing $S^r$ the elected leader is entitled to mint $\rho$ new coins, where in our model notations, $R=\rho \cdot e$, and $e$ is the total number of rounds in the protocol. Once $e$ rounds have passed, Escrow-DKG is relaunched with a fresh set of participants, and possibly larger deposits. This implies that Escrow-DKG would be run over Ethereum periodically, say every week or month. It is crucial for Escrow-DKG to support a very large number of participants, $n$ (so that it is unfeasible for one entity to control too many DKG participants). 
The $S^r$s would be included in Ethereum blocks according to some predetermined schedule (say every $40^\text{th}$ block). Once a new $S^r$ has been published, the elected leader would propose a sidechain block, hash-referencing the Ethereum block that contains $S^r$. Obviously, the leader would also hash-chain the previous block in the sidechain. 

While chain splits are definitely possible, at this point we will not elaborate as to how they might be resolved. We will say that chain splits can be addressed by following the longest chain rule, or other chain selection rules dictated by Byzantine agreement protocols (e.g., PBFT~\cite{PBFT}, Tendermint~\cite{Tendermint}, Casper FFG~\cite{CasperFFG}, or Hot-stuff~\cite{Hot-Stuff}). We shall also say that in order to incentivize participants to follow the chain selection rule, its logic should be enforced by the Ethereum smart contract that serves as escrow\footnote{Long-range attacks could be addressed by the ``social consensus'' approach~\cite{weak-subjectivity}.}.


Chain rules aside, $t+1$ colluding parties would be able to reveal the $S^r$s in advance and could translate this information to disproportionate rewards. This is also true in Bitcoin as mentioned earlier. There are a few advantages to our approach though. First, $t$ can be tuned. Specifically, $t$ could be larger than $n/2$ resulting in better security. Of course, the threshold parameter must coincide with the chain selection rule and tuning it to be too large puts the robustness of the sidechain in risk (this trade-off has been discussed in Sec.~\ref{properties}). Second, framing, which is actualized by submitting $S^r$ prior to its prescribed Ethereum height, makes the system trustworthy as long as no entity is controlling $\frac{t+1}{2}$ DKG participants. This would be a realistic assumption if anyone that wished to do so could enrol to Escrow-DKG, and the total deposit, $\Delta \cdot n$, would be high enough (relative to $R$). 

Unlike in Bitcoin, in our proposal, the probability of each DKG participant to be elected leader is equal. The only way to increase one's probability, and thus also her fraction of rewards, is to buy more DKG participants. Our protocol induces a fair lottery---with a linear relationship between the sum invested and the expectancy of rewards. Moreover, breaking this linear relationship comes with a known and clear price tag\footnote{Without elaborating, the option to frame hidden chains could largely mitigate the risk of long-range attacks.}.  


This scheme relies on a super-scalable DKG and the need to reconstruct threshold signatures with very large $t$s---this would be computationally intensive and would require significant communication among the participants. One approach to circumvent this problem could borrow from Dfinity~\cite{Dfinity}---use many parallel independent DKG protocols run side by side. 
Another approach, presented in~\cite{large-scale-DKG}, generalizes on the traditional SSS-based DKG approach with independent linear equations dictated by a matrix. Using adequate sparse matrices yields in highly a scalable DKG procedure (in which reconstruction has a negligible probability of failing, even if enough participants try to reconstruct).




\end{subappendices}


\end{document}